\newtheorem{theorem}{Theorem}[section]
\newtheorem{lemma}[theorem]{Lemma}
\theoremstyle{definition}
\begin{document}
\pagestyle{plain}
%
\title{Lightening the Load: A Cluster-Based Framework for A Lower-Overhead, Provable Website Fingerprinting Defense
}

\author{\IEEEauthorblockN{Khashayar Khajavi}
	\IEEEauthorblockA{Simon Fraser University\\
		kka151@sfu.ca}
	\and
	\IEEEauthorblockN{Tao Wang}
	\IEEEauthorblockA{Simon Fraser University\\
		taowang@sfu.ca}
	}

\IEEEoverridecommandlockouts
\makeatletter\def\@IEEEpubidpullup{6.5\baselineskip}\makeatother
\IEEEpubid{\parbox{\columnwidth}{
		Network and Distributed System Security (NDSS) Symposium 2026\\
		23-27 February 2026, San Diego, CA, USA\\
		ISBN 979-8-9919276-8-0\\
		https://dx.doi.org/10.14722/ndss.2026.241760\\
		www.ndss-symposium.org
}
\hspace{\columnsep}\makebox[\columnwidth]{}}

\maketitle

\begin{abstract}
Website fingerprinting (WF) attacks remain a significant threat to encrypted traffic, prompting the development of a wide range of defenses. Among these, two prominent classes are regularization-based defenses, which shape traffic using fixed padding rules, and supersequence-based approaches, which conceal traces among predefined patterns.
In this work, we present a unified framework for designing an adaptive WF defense that combines the effectiveness of regularization with the provable security of supersequence-style grouping.  
The scheme first extracts behavioural patterns from traces and clusters them into $(k,l)$-diverse anonymity sets; an early-time-series classifier (adapted from ECDIRE) then switches from a conservative global set of regularization parameters to the lighter, set-specific parameters.  
We instantiate the design as \emph{Adaptive Tamaraw}, a variant of Tamaraw that assigns padding parameters on a per-cluster basis while retaining its original information-theoretic guarantee. Comprehensive experiments on public real-world datasets confirm the benefits.  
By tuning $k$, operators can trade privacy for efficiency: in its high-privacy mode, Adaptive Tamaraw pushes the bound on any attacker's accuracy below \textbf{30\%}, whereas in efficiency-centred settings it cuts total overhead by \textbf{99} percentage points compared with classic Tamaraw.  
\end{abstract}


%
\IEEEpeerreviewmaketitle

\section{Introduction}
Tor is the leading low-latency anonymity network, relied upon by millions worldwide to shield their online activities from surveillance and censorship~\cite{dingledine2004tor}. Despite its robust encryption and onion routing design, Tor leaks metadata such as packet sizes, timing, and directional patterns that adversaries can exploit. Over the years, website fingerprinting (WF) attacks have demonstrated that by analyzing these residual traffic features, even passive attackers can infer with high accuracy which webpages users are visiting~\cite{wang2014effective,wang2016realistically,hayes2016k}. Modern deep learning techniques employing transformer models~\cite{vaswani2017attention} and multi-channel representations can now extract fine-grained patterns from traffic data, achieving higher recall and precision ~\cite{sirinam2018deep,rahman2019tik,shen2023subverting,mathews2024laserbeak}.


In response to these threats, a variety of defenses have emerged to mitigate WF attacks. One broad class of approaches is regularization-based defenses, which aim to conceal traffic patterns by enforcing packet transmission rules to reduce entropy. A common rule is a fixed, constant packet rate,
achieved with delays and dummy packets~\cite{cai2014systematic,cai2014cs,holland2020regulator}. These strategies often apply the same padding schedule to all sites, 
which leads to excessive overhead, particularly for bursty traffic. A notable example is Tamaraw~\cite{cai2014systematic}, one of the few defenses to provide information-theoretic guarantees on attacker success. Tamaraw has not been broken; almost all defenses that lack such guarantees have been defeated by more powerful classifiers 
\cite{juarez2015wtf,gong2020zero,gong2022surakav}.

Another family of defenses aims to construct \textit{super-traces}, forcing instances of similar webpages to produce the same pattern and thus hinder an attacker from inferring the correct webpage~\cite{wang2014effective,nithyanand2014glove,wang2017walkie,shen2023subverting}.
However, these defenses are calibrated on a fixed set of sites.  
During an offline stage they partition that set into anonymity sets (clusters) with similar traffic characteristics and derive a canonical super‑trace for each cluster.  
At run time every page load is forced to follow the super-trace assigned to its destination page. 
A page that does not exist in the offline stage produces undefined behavior, but in practice, 
ordinary browsing covers far more destinations than any feasible reference set can capture.

In this work, we propose a hybrid website fingerprinting defense with the strengths of both regularization and supersequence approaches. Traffic loading under our defense begins with regularization: a global rate that protects early traffic without requiring prior knowledge of the destination. As the trace evolves, we switch to supersequence: a streaming early time series classifier assigns the live trace to a preconstructed anonymity set, after which the defense transitions to a lightweight regularization rate specific to that set. These sets are generated offline through clustering, subject to k-anonymity~\cite{sweeney2002k} and l-diversity~\cite{machanavajjhala2007diversity}.
Our construction satisfies an information theoretic upper bound on any adversary’s success. 

Using this design framework, we create \textit{Adaptive Tamaraw}, a novel extension of the original Tamaraw defense that retains its provable security while reducing overhead. We provide a rigorous analysis of Adaptive Tamaraw that bounds any attacking classifier's maximum accuracy by the size and diversity of the anonymity set. We also evaluate our defense empirically: we test state-of-the-art website fingerprinting attacks to verify that the actual attack accuracy remains within our theoretical bounds while yielding lower overhead compared to the fixed-rate Tamaraw. 


We summarize the main contributions of our work as follows:
\begin{itemize}
    \item We propose a general design framework for constructing provable website fingerprinting defenses that combines regularization defenses with dynamic clustering to adaptively adjust defense parameters in real time. As a concrete instantiation, we develop \textit{Adaptive Tamaraw}, an extension of the original Tamaraw defense that preserves its information-theoretic guarantees while reducing bandwidth and latency overhead.
    
    \item We provide a formal analysis that quantifies the privacy guarantees of Adaptive Tamaraw. Specifically, we derive upper bounds on the maximum achievable attack accuracy, independent of the underlying classifier, based on the size and diversity of each anonymity set.
    
    \item Our experiments show that Adaptive Tamaraw offers flexible tunability, enabling practitioners to adjust the defense parameters to meet privacy and efficiency requirements. When configured for strong privacy, the defense reduces attack accuracy to below 30\%. In efficiency-focused settings, it achieves substantial reductions in the total overhead, up to 99 percentage points relative to the original Tamaraw. The implementation is publicly available at \url{https://github.com/khashayarkhaj/Adaptive-Tamaraw}.

\end{itemize}

\section{Background and Related Work}

\subsection{Website Fingerprinting Attacks}
Website fingerprinting attacks enable an on-path adversary to infer which website a user visits by analyzing observable traffic features such as packet sizes, timings, and ordering, even when the content is encrypted. Early methods relied on manually engineered features combined with classical classifiers like SVMs~\cite{panchenko2016website}, k‑Nearest Neighbors~\cite{wang2014effective}, and k‑fingerprinting~\cite{hayes2016k}, the last of which uses random decision forests on manually extracted features.

The rise of deep learning has enabled WF attacks to automatically extract rich features from raw traffic. Deep Fingerprinting~\cite{sirinam2018deep} uses a CNN to capture local and global patterns, while Tik‑Tok~\cite{rahman2019tik} improves accuracy by incorporating timing and direction.  
Robust Fingerprinting (RF)~\cite{shen2023subverting} introduces the 2\,$\times$\,$N$ Traffic Aggregation Matrix, which counts inbound/outbound packets in 44\,ms slots and feeds it into a four-layer CNN, smoothing jitter while preserving bursts. RF surpasses 90\% closed-world accuracy and performs well in open-world settings, even against defenses like FRONT~\cite{gong2020zero}.  
LASERBEAK~\cite{mathews2024laserbeak} adds multi-channel features (timing, direction, burst edges) and attention layers to exceed 95\% accuracy. These attacks show that small architectural changes with strong features can defeat many prior padding defenses.


\subsection{Website Fingerprinting Defenses}
When designing defenses against website fingerprinting attacks, the literature has broadly followed two lines of work. The first focuses on empirical defenses that are evaluated primarily through experiments and often rely on heuristics or traffic obfuscation strategies to reduce attack success. The second aims to provide formal, information-theoretic guarantees that rigorously bound the adversary’s ability to classify webpages. Table~\ref{tab:defence_overview} offers a comprehensive summary of both approaches, highlighting their key properties along with their strongest known attack accuracies. We now examine these two categories in more detail, beginning with empirical defenses.

\subsubsection{Empirical WF Defenses}

Many website fingerprinting defenses rely on empirical evaluation to show effectiveness against specific attacks. They often reduce fingerprinting accuracy and can be tuned for lower overhead but lack formal guarantees and remain vulnerable to stronger or adaptive adversaries. Empirical defenses fall into two main categories:

{\noindent \bf Obfuscation-Based Defenses}: These approaches inject randomness to obscure traffic patterns. WTF-PAD~\cite{juarez2015wtf} adds dummy packets during idle periods, FRONT~\cite{gong2020zero} perturbs early bursts with randomized padding, and Surakav~\cite{gong2022surakav} employs a GAN to simulate realistic timing. While effective against some attacks, they lack formal guarantees and are vulnerable to modern methods like Laserbeak~\cite{mathews2024laserbeak} and RF~\cite{shen2023subverting}. As shown in Table~\ref{tab:defence_overview}, these defenses still result in relatively high attack accuracies.

{\noindent \bf Regularization-Based Defenses}: These defenses reduce leakage by shaping traffic into fixed or structured patterns. BuFLO~\cite{dyer2012peek} and CS-BuFLO~\cite{cai2014cs} enforce constant rates, providing strong obfuscation but at high cost. RegulaTor~\cite{holland2020regulator} improves efficiency by reshaping only sensitive trace segments. However, most lack formal guarantees and thus can still leak expressive features. 

While empirical defenses can reduce current attack success, their lack of formal guarantees limits robustness against future threats. In contrast, as we will discuss,  Adaptive Tamaraw offers provable security~(Section~\ref{sec:max_accuracy}) and adaptively lowers overhead while preserving strong privacy guarantees.

\begin{table*}[t]
\centering
\caption{Strongest published attacks against representative WF defences in the closed-world setting.}
\vspace{-2pt}
\begin{itemize}
  \item Defences \textbf{without} a provable formal bound are consistently broken by adversarial training, with the attacker’s accuracy remaining well above the 50\,\% threshold commonly associated with robust resistance~\cite{shen2024real}.
  \item Super-sequence defences achieve low attack accuracy but protect only those websites present in their training corpus, limiting their applicability in open-world browsing.
\end{itemize}
\label{tab:defence_overview}
\begin{tabular}{@{}llllcc@{}}
\toprule
\textbf{Category} & \textbf{Defence} & \textbf{Strongest published attack\,$^\ddagger$} & \textbf{Accuracy (\%)} & \textbf{Formal bound?} & \textbf{Limited to Dataset?} \\ \midrule
\multirow{3}{*}{Obfuscation}
  & WTF-PAD          & RF~\cite{shen2023subverting}             & 96.6 & No  & No \\
  & FRONT            & LASERBEAK~\cite{mathews2024laserbeak}    & 95.9 & No  & No \\
  & Surakav          & LASERBEAK~\cite{mathews2024laserbeak}    & 81.5 & No  & No \\ \midrule
\multirow{3}{*}{Regularization}
  & RegulaTor        & RF~\cite{shen2023subverting}             & 67.4 & No  & No \\
  & Tamaraw          & LASERBEAK~\cite{mathews2024laserbeak}    & 25.3 & \textbf{Yes} & No \\
  & Walkie-Talkie    & RF~\cite{shen2023subverting}             & 93.9 & \textbf{Yes} & \textbf{Yes} \\ \midrule
\multirow{2}{*}{Super-sequence}
  & Super-Sequence   & Tik-Tok~\cite{rahman2019tik,shen2023subverting} & 29.18 & \textbf{Yes} & \textbf{Yes} \\
  & Palette          & RF~\cite{shen2023subverting,shen2024real}        & 36.43 & \textbf{Yes}$^{*}$ & \textbf{Yes} \\ \bottomrule
\end{tabular}
\vspace{3pt}

\footnotesize
$^{\dagger}$\;“Formal bound” indicates that the defence supplies an explicit, theoretical upper bound on an attacker’s success for any strategy.  
“Limited to Dataset” means the defence only applies to the websites contained in its training set, which is impractical for real-world browsing.\\
$^{*}$\;Palette does not provide a closed-form mathematical bound; its “Yes$^{*}$” entry denotes that its cluster-anonymisation method yields structured leakage reductions that approach provable guarantees~\cite{shen2024real}.\\
$^{\ddagger}$\;Where two references appear, the first paper introduces the attack and the second reports its performance against the listed defence.
\end{table*}

\subsubsection{WF Defenses With Formal Bounds}

A number of defenses have sought not only to empirically reduce the success of website fingerprinting attacks but also to provide formal, provable bounds on security. Among these, Tamaraw~\cite{cai2014systematic} stands out as the first defense to derive an explicit upper bound on an adversary’s success probability. By construction, its uniform-rate padding guarantees that no attack, regardless of strategy, can exceed the bound as traces for different websites have the same rate and often have the same length. However, this theoretical rigor comes at a steep cost, and Tamaraw’s fixed-rate design introduces substantial overhead.

Other formal defenses attempt to balance rigor with efficiency. Walkie-Talkie~\cite{wang2017walkie}, for example, reduces traffic uniqueness by pairing sites and enforcing collisions in their trace representations. While it provides a bounded adversarial success rate in closed-world settings, it assumes prior knowledge of the entire trace; an alternative randomized version avoids this assumption, but loses its adversarial bound.

Wang et al. introduced the Supersequence defense~\cite{wang2014effective}, which groups webpages into anonymity sets and pads each trace to match a common super-sequence, ensuring that all traces within a cluster are indistinguishable. 
Similarly, Palette~\cite{shen2024real} applies traffic cluster anonymization in real time by shaping traffic to match canonical burst patterns within each cluster, though it does not provide formal bounds.
However, these supersequence-based methods are limited to the websites present in their training dataset. 


As summarized in Table~\ref{tab:defence_overview}, a clear dichotomy exists in the literature: while many low-overhead defenses have been proposed, those lacking formal guarantees have often been broken by subsequent attacks. In contrast, defenses with formal bounds, such as Tamaraw, have proven resilient over time but at the cost of high overhead. The core novelty of our paper lies in introducing a framework to bridge this gap between provable security and practical efficiency. In this work, we build upon Tamaraw specifically due to its unique and unbroken provably secure foundation, and demonstrate that its overhead can be reduced via dynamic clustering and switching mechanisms, all while preserving the original formal guarantees. Crucially, unlike supersequence methods, this approach remains applicable to websites beyond the training set, making it suitable for real-world, out-of-training scenarios.

\section{Threat Model}
Following prior work~\cite{gong2022surakav, gong2020zero, shen2024real}, we assume a local, passive adversary who observes all traffic between the client and the Tor guard node. The adversary cannot modify traffic but records packet size, timing, and direction to infer the visited webpage. They can segment traffic into individual page loads, assuming users visit one page at a time. To execute the attack, the adversary collects training data by visiting monitored webpages and recording their traffic under similar conditions to the target user. The goal is to identify which monitored site the client is loading.

In this work, we apply two evaluation scenarios for the proposed defense: one where the user’s traffic is limited to webpages \emph{seen during the defense’s training phase} (\textit{in-training webpages}), and another where the user can also visit \emph{webpages not included in the defense training set} (\textit{out-of-training webpages}). In this latter case, we examine how the defense performs when users browse pages that were not part of the training data used to construct the anonymity sets or train the classifiers. This setting reflects real-world usage, where users may access arbitrary webpages beyond the training set, allowing us to assess the defense’s ability to protect previously unseen traffic patterns. 

We also assume the defense operates without prior knowledge of the destination webpage. 
Our design goal is to develop an easy-to-implement defense that is generally suitable for browsing any site, including dynamic webpages generated on the fly.
As such, we do not limit the defense's applicability to a pre-configured set of static pages and we also avoid the privacy risk of disclosing the destination to a defending proxy. 
As we will elaborate upon in Section \ref{sec:pattern_detection}, our approach is designed to operate on the more fundamental and recurring traffic patterns that emerge during a page load.

\section{Problem Statement}
\label{sec:problem_statement}

When a client loads a webpage, it produces a sequence of encrypted packets called a \emph{traffic trace}:
\[
f = \bigl[(t_1, d_1), (t_2, d_2), \ldots, (t_N, d_N)\bigr],
\]
where each packet \(f_i\) has a timestamp \(t_i\) and direction \(d_i \in \{+1, -1\}\) (outgoing/incoming). Since Tor pads packets to a fixed size, the adversary observes only timing and direction.

A defense modifies the trace \(f\) to produce a defended trace \(f'\) by delaying real packets or adding dummies, obscuring patterns used by WF attacks.  We evaluate the overhead of defenses by \textit{bandwidth overhead} and \textit{time overhead}. The bandwidth overhead is measured as the total number of dummy packets divided by the total number of real packets over the whole dataset. Similarly, the time overhead is measured as the total extra time divided by the total loading time in the undefended case over the whole dataset.
    

In this paper, we aim to address three limitations of existing WF defenses. 
First, static padding schedules that ignore real-world traffic diversity are inefficient. 
Second, supersequence-based defenses struggle in \emph{out-of-training} settings with unseen websites.  
Third, defenses that lack provable security guarantees are frequently broken by later attacks. 
The following expands on each challenge.


\subsubsection*{1. Excessive Overhead from Static Regularization}
Regularization defenses (such as Tamaraw) shape the traffic so that traces appear uniform, minimizing the features an attacker might exploit. 
However, these methods typically apply a fixed set of parameters statically across all traffic. This static configuration does not account for the inherent
variability in traffic patterns across different webpages, which can result in either
excessive overhead or insufficient obfuscation in certain
cases. 

\subsubsection*{2. Poor Generalization in Supersequence-Based Defenses}
Supersequence-based approaches attempt to create anonymity sets by mapping each webpage to a common supertrace. While this strategy can provide strong theoretical guarantees, it is only effective on the set of \emph{in-training} webpages used during defense construction. If a user's traffic originates from an \emph{out-of-training} webpage, there is no reliable mechanism to map the trace to a corresponding supersequence. This restriction greatly limits the applicability of such defenses in real-world deployments. We demonstrate that our method generalizes effectively to \emph{out-of-training} webpages, making it more suitable for practical scenarios involving previously unseen inputs.

\subsubsection*{3. Lack of Provable Security Guarantees}

Existing defenses have usually been defeated by later-published attacks. In particular, as it can be seen in Table~\ref{tab:defence_overview}, state-of-the-art obfuscation defenses (e.g., FRONT, Surakav) often achieve relatively low overhead but have been circumvented by advanced WF attacks. A key limitation of these defenses is the lack of \emph{provable security}; they do not provide a theoretical upper bound on the adversary’s maximum success rate.

To provide an upper bound on attacker success, we start by analyzing the 
\emph{non-injectivity} of a defense.  
Let \(D\) map an input trace \(f\) to an output trace \(f' = D(f)\).  
For every output, the pre-image is  
\[
  D^{-1}(f') \;=\; \{ f : D(f)=f'\}.
\]

A defense is called \(\delta\)-non-injective with respect to a given defended trace \(f'\) if \(\bigl|D^{-1}(f')\bigr| \ge\delta\).  
An attacker who sees \(f'\) must then guess among at least \(\delta\) candidates to infer $f$.
However, multiplicity alone is not sufficient for website fingerprinting security: the attacker’s success also depends on \emph{how those inputs are distributed across webpages}.  
If one webpage dominates the pre-image, the majority vote still yields a high success rate, even when \(\bigl|D^{-1}(f')\bigr|\) itself is large.  
The relevant quantity is label diversity: the ratio between the total pre-image size and the largest webpage-specific share.

To capture both multiplicity and label diversity, we refine the measure.
Let  
\(D^{-1}_{w}(f') = \{\,f\in D^{-1}(f') : \text{class}(f)=w\}\)
be the pre-image subset that originates from webpage \(w\).
We define the \emph{weighted pre-image size}
\begin{equation}
  \tilde{\delta}(f')
  \;=\;
  \frac{|D^{-1}(f')|}
       {\displaystyle\max_{w} |D^{-1}_{w}(f')| }.
\label{eq:weighted-delta}
\end{equation}
Intuitively, \(\tilde{\delta}(f')\) counts how many distinct inputs are
merged \emph{per majority webpage}.  
If every input in the bucket comes from the same site, the denominator equals the numerator and \(\tilde{\delta}(f') = 1\), so a perfect information attacker will always guess the class of $f'$ correctly. 

\smallskip
\textbf{Weighted \(\delta\)-non-injectivity.}  
We call a defense \emph{weighted \(\delta\)-non-injective} with respect to a given defended trace \(f'\) if
\(\tilde{\delta}(f') \ge \delta\).
Since an optimal attacker chooses the majority webpage, its success rate is
\(\max_{w}|D^{-1}_{w}(f')|/|D^{-1}(f')| = 1/\tilde{\delta}(f')\).
Thus, a weighted \(\delta\)-non-injective defense guarantees
\[
  \Pr[\text{attacker succeeds} \mid f'] \;\le\; \frac{1}{\delta}.
\]

\subsubsection*{Non-Uniformly Weighted $\delta$-Non-Injectivity}

While the above definition bounds the attacker's success on an individual trace, prior works and subsequent WF studies ~\cite{shen2023subverting, mathews2024laserbeak, cai2014systematic} focus on the \emph{average} success rate over the distribution of traces. To capture this, we define the \emph{non-uniformly weighted \(\delta\)-non-injectivity} property.

Let \(\mathcal{F}'\) denote the support of defended outputs, and let \(P(f')\) be the probability of observing a defended trace \(f'\) with \(f' \in \mathcal{F}'\).
We say the defense is \emph{non-uniformly weighted \(\delta\)-non-injective} if this inequality holds:
\[
\mathbb{E}_{f' \in \mathcal{F}'}\left[\Pr[\text{attacker succeeds} \mid f']\right] = \mathbb{E}_{f' \in \mathcal{F}'}\left[\frac{1}{\tilde{\delta}(f')}\right] \le \frac{1}{\delta}.
\]  
 This definition implies that the attacker’s success rate, on average, remains below the inverse of \(\delta\). This choice aligns with previous work on website fingerprinting defenses, which report attacker performance in terms of average success rate over the trace distribution, i.e., they evaluate \emph{non-uniform} security. Uniform security, while stricter, is rarely achieved in practice and not the focus of most deployed defenses. We therefore adopt the non-uniform formulation to remain consistent with established evaluation methodology~\cite{cai2014systematic}.

\section{Design Details}

\subsection{Overview}

In this work, we introduce the first website fingerprinting defense that performs \textit{adaptive parameter selection} while providing a formal upper bound on attacker success. Our design is motivated by three goals:
(i) replacing static regularization configurations with traffic adaptation,
(ii) generalizing to webpages not seen during training, and
(iii) ensuring provable security guarantees.

We do so using a {\bf global-to-local defense strategy}. Our defense assumes no knowledge of the page under load, so it always starts with a global strategy based on a regularization defense  (such as BuFLO~\cite{dyer2012peek}, Tamaraw~\cite{cai2014systematic}, or RegulaTor~\cite{holland2020regulator}).
We begin each page load using a set of global parameters that apply conservative padding and scheduling to protect early traffic without knowing the destination. After enough of the trace has been observed to confidently identify an anonymity set, we switch to a more efficient set of padding parameters tailored to that set.
This two stage strategy ensures strong privacy guarantees during the early ambiguous phase while reducing overhead in the remainder of the trace.

To concretely instantiate this global-to-local strategy, we build on the Tamaraw defense, leading to what we call \textit{Adaptive Tamaraw}.

\subsection{Adaptive Tamaraw}
\label{sec:adaptive-tamaraw}

To evaluate the general framework introduced in this paper, we use a concrete, well-studied regularization defense in our global-to-local strategy: Tamaraw~\cite{cai2014systematic}. Among existing regularization-based defenses, Tamaraw stands out as the only one that offers a formal, information-theoretic bound on an adversary's success rate. As shown in Table~\ref{tab:defence_overview}, it also yields the lowest attacker accuracy compared to other defenses. 

Classic Tamaraw transmits fixed-size cells at two constant rates: $\rho_{\text{out}}$ seconds between successive upstream cells and $\rho_{\text{in}}$ seconds between successive downstream cells. Padding continues in each direction until its cell count reaches the next multiple of a hyperparameter $L$. Since $(\rho_{\text{out}}, \rho_{\text{in}})$ are fixed and identical across all traces, the resulting shapes of the defended traces are uniform in timing and rate. The only distinguishing feature that may remain between traces is their total length: specifically, which multiple of $L$ they are padded to.

Tamaraw requires a single pair of padding parameters $(\rho_{\text{out}}, \rho_{\text{in}})$ to be applied uniformly across all webpages. This global configuration leads to inefficiencies, as some webpages end up causing far more dummy traffic than necessary. As noted by~\cite{cai2014systematic}, the standard method for configuring Tamaraw involves fixing the $L$ and then performing a grid search over candidate $(\rho_{\text{out}}, \rho_{\text{in}})$ pairs. Each pair yields a specific tuple of bandwidth and time overhead, and the set of resulting points is filtered to retain only the Pareto-optimal ones (those that are not dominated in both overhead dimensions). 


In Adaptive Tamaraw, we start each trace with these {\em global} parameters, and then switch to {\em local} padding parameters when possible. This strategy is governed by three processes that will be described in the subsequent subsections. 

\begin{enumerate}

    \item We analyze the webpages in the training set of the defense to identify representative {\em traffic patterns}, which we call {\bf Intra-Webpage Pattern Detection} (Section~\ref{sec:pattern_detection}). 

\item These traffic patterns are grouped together in {\bf Anonymity Set Generation} (Section~\ref{sec:AS_gen}), wherein local parameters are identified for each anonymity set. 

\item {\bf Early Anonymity Set Detection} (Section~\ref{sec:early_pred}) tells the defense when to switch from global to local parameters based on packets of the observed live trace. 
\end{enumerate}

\subsection{Intra-Webpage Pattern Detection}
\label{sec:pattern_detection}

Webpages do not generate a single, uniform traffic pattern; rather, they produce a number of patterns. This can be due to: dynamically generated content such as personalized recommendations~\cite{ravi2009survey}, advertisements that can differ significantly in size and frequency~\cite{veverka2024impact}, and variations in localization tailored for different countries~\cite{okonkwo2023localization}. As an illustrative example, Figure~\ref{fig:tam_variation} shows four traces from the same webpage in the Sirinam et al.\ dataset \cite{sirinam2018deep}, visualized using the Traffic Aggregation Matrix (TAM)~\cite{shen2023subverting}.  The two rows highlight two distinct recurring structures, highlighting that even a single page can yield
multiple characteristic patterns.

\begin{figure}[ht]
    \centering
    \subfloat[Pattern 1]{%
        \includegraphics[width=0.50\textwidth]{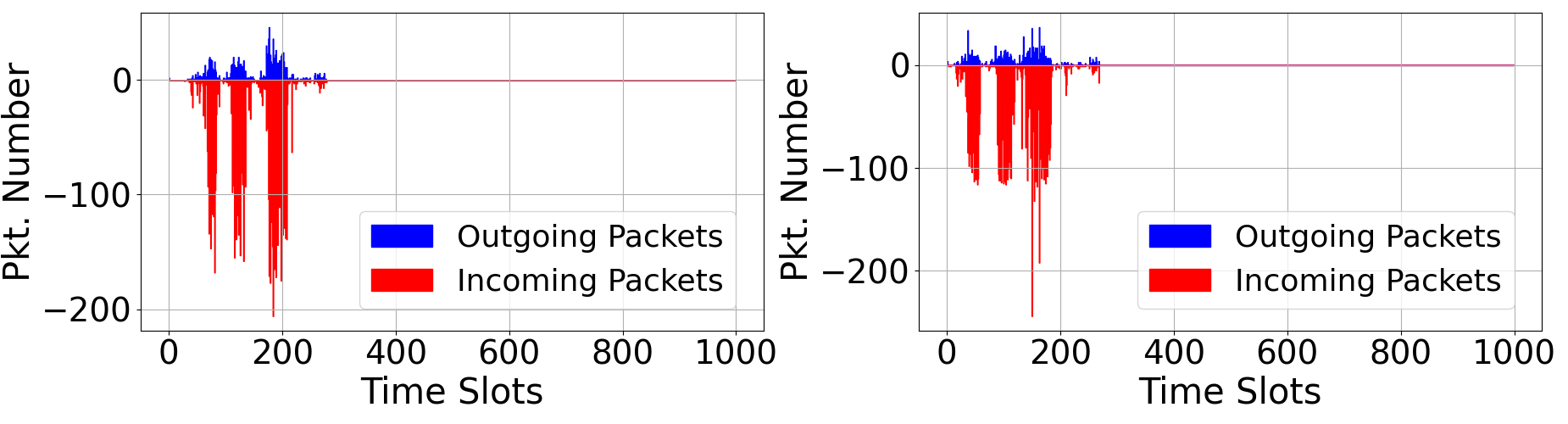}%
        \label{fig:website70}%
    }
    \hfill
    \subfloat[Pattern 2]{%
        \includegraphics[width=0.50\textwidth]{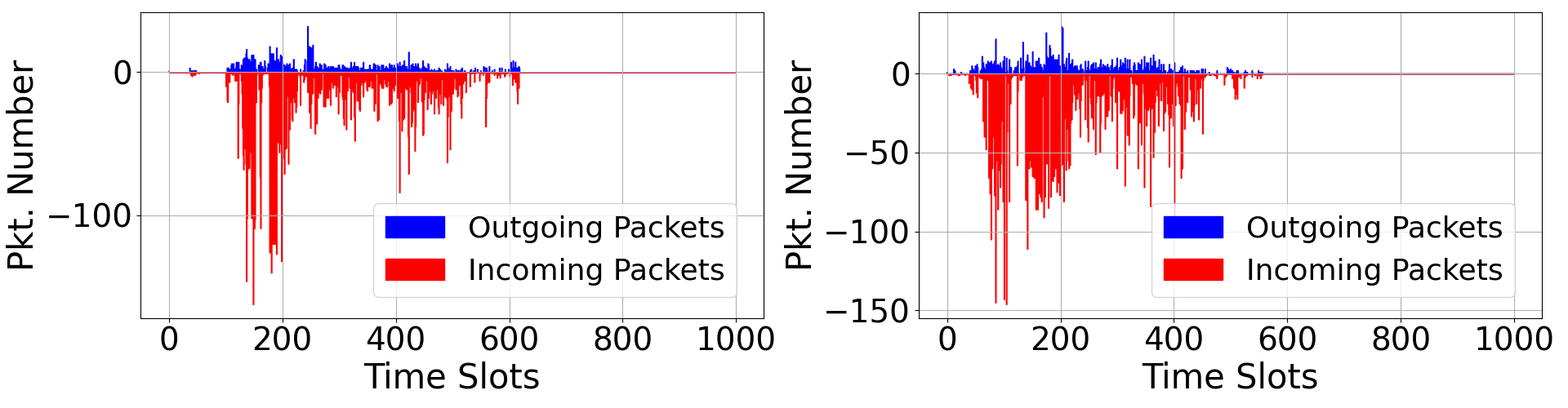}%
        \label{fig:website15}%
    }
    \caption{Visualization of the TAMs of four traces from the \emph{same} page, in the dataset obtained by \cite{sirinam2018deep}. Each TAM divides the first 1000 time-slots (80 ms per slot) into rows for outgoing packets (blue, positive values) and incoming packets (red, negative values); the height of each bar records the packet count in that slot. The first two traces (top row) exhibit a very similar pattern, while the next two traces (bottom row) share a distinct structure. This highlights the possibility of multiple recurring traffic patterns within a single page.}
    \label{fig:tam_variation}
\end{figure}

As a result of these observations, aggregating all traces from a single webpage into one homogeneous profile is inefficient and often leads to over-padding and degraded performance. However, many existing defenses construct anonymity sets at the webpage level, assuming that all traces from the same page behave similarly~\cite{wang2014effective,nithyanand2014glove,shen2024real}.
Instead, our approach aggregates traffic at the \textit{pattern level} to form smaller, more homogeneous groups. 

To obtain distinct traffic patterns within each webpage, we first represent each network trace as a time series using the TAM representation introduced in ~\cite{shen2023subverting}. In our TAM representation, we divide the total page load time into a fixed number of time slots and record the number of incoming and outgoing packets in each slot. This results in a two-dimensional time series, where each trace can be viewed as a bivariate sequence capturing the dynamics of incoming and outgoing traffic over time.

With traces represented as TAM time series, we cluster the traces of each webpage individually to isolate recurring traffic patterns. We refer to the resulting clusters as \emph{intra-webpage clusters} (i.e., for the remainder of this section, when we refer to intra-webpage clusters, we mean clusters obtained by grouping traces originating from a single webpage).




To perform this clustering, we adopt the \textit{Cluster Affinity Search Technique} (CAST) \cite{aghabozorgi2014hybrid}. CAST is a similarity-based clustering algorithm that avoids pre-specifying the number of clusters. Instead, it incrementally builds clusters by comparing each trace's \textbf{affinity} (its average similarity) to the current cluster. We define the affinity of a time series $F_x$ to a cluster $S$ as:
\begin{equation}
a_S(F_x) = \frac{1}{|S|} \sum_{F_y \in S} A(F_x, F_y),
\end{equation}

\noindent
where $A(F_x, F_y)$ is the similarity measure between two TAM series. Intuitively, $a_S(F_x)$ captures \textit{how well $F_x$ fits with the existing members of $S$}: high affinity indicates that $F_x$ behaves similarly to most traces in $S$, while low affinity indicates that it is not consistent with the dominant traffic pattern in that cluster. CAST maintains a currently ``open'' cluster and evolves it using a fixed affinity threshold $T$. At each step, the unassigned trace with the highest affinity to the open cluster is examined:

\begin{itemize}
    \item \textbf{Addition:} If its affinity is at least $T$, the trace is added to the cluster.
    \item \textbf{Removal:} Otherwise, CAST checks if any current member of the cluster has affinity below $T$ with the current cluster. Such traces are considered weakly related to the cluster and are removed.
    \item \textbf{Closure:} If no addition or removal is possible, the cluster is considered \textit{stable} and is closed; CAST then begins forming the next cluster from the remaining traces.
\end{itemize}

 Our preliminary evaluations (discussed in Appendix \ref{appendix:cast}) revealed that for traces in a single webpage, CAST generated several large intra-webpage clusters, along with a long tail of smaller ones. Small intra-webpage clusters are not suitable, as our goal is to regularize all traces within an intra-webpage cluster to appear identical; if an intra-webpage cluster contains too few traces, the defense incurs the overhead of padding and shaping with minimal privacy benefit. To be more precise, the attacker’s uncertainty remains low despite the added cost, resulting in an inefficient overhead-to-privacy tradeoff.
 To produce better clusters, we introduced four modifications to the original CAST algorithm proposed in \cite{aghabozorgi2014hybrid}, which we will describe below.

\textbf{Affinity Computation.}  
As we mentioned, the affinity of a time series \(F_x\) with respect to a candidate cluster \(\mathcal{S}\) is computed as the average of the pairwise similarity scores between \(F_x\) and each member \(F_y \in \mathcal{S}\).
Following prior work~\cite{wu2018random,badiane2022empirical}, the similarity score \(A(F_x, F_y)\) is computed using an exponential function of the squared Euclidean distance:
\[
A(F_x, F_y) \;=\; \exp\!\Bigl(-\frac{d^2(F_x, F_y)}{\sigma^2}\Bigr),
\]
where \(\sigma\) is a bandwidth hyperparameter that governs the sensitivity of the similarity function. Previous studies~\cite{correa2012locally} have shown that a global \(\sigma\) can fail to capture variations in data density. 

To address this, we adopt a local scaling approach inspired by self tuning spectral clustering~\cite{ng2001spectral,zelnik2004self}, in which each time series \(F_x\) is assigned its own scale parameter \(\sigma_x\), typically set to the distance between \(F_x\) and its \(K\)-th nearest neighbor. 
The similarity is then computed as:
\[
A(F_x, F_y) \;=\; \exp\!\Bigl(-\frac{d^2(F_x,F_y)}{\sigma_x \, \sigma_y}\Bigr).
\]

\textbf{Cleaning Step.}
After forming the initial clusters, we perform a final pass. 
For each point \(c\) in each cluster \(\mathcal{C}_i\), we compute its affinity to every other cluster \(\mathcal{C}_j\). If we find a cluster \(\mathcal{C}_j\) in which the point \(c\) achieves a strictly higher affinity than in its current cluster \(\mathcal{C}_i\), we remove \(c\) from \(\mathcal{C}_i\) and reassign it to \(\mathcal{C}_j\). This process is repeated until no point changes its cluster membership. 

\textbf{Post-Processing Step.}
To limit the number of clusters produced by our modified CAST algorithm, we adopt a post-processing procedure inspired by \cite{tyuryukanov2017post}.  
Each cluster consists of a set of TAM-based trace representations. We measure its \emph{cut} and \emph{volume} using the similarity score \(A\) introduced earlier:
\[
\mathrm{cut}(\mathcal{C}) \;=\; \sum_{f \in \mathcal{C}} \sum_{f' \notin \mathcal{C}} A(f,f'),
\quad
\mathrm{vol}(\mathcal{C}) \;=\; \sum_{f\, ,\,f' \in \mathcal{C}} A(f,f').
\]
The \emph{expansion ratio}
$\phi(\mathcal{C}) = \mathrm{cut}(\mathcal{C})/\mathrm{vol}(\mathcal{C})$
is low when traces inside \(\mathcal{C}\) are highly similar while remaining well separated from traces outside.

We then apply an iterative post-processing routine: as long as the number of clusters exceeds a predefined threshold, we sort the clusters by size, select the smallest cluster, and merge it with a neighboring cluster such that the resulting partition minimizes the largest expansion ratio among all clusters~\cite{sanchez2014hierarchical}. 

\textbf{Dynamic Affinity Threshold.}
Finally, rather than using a fixed affinity threshold for all traces, we compute it in a data-driven manner. We first calculate the global mean similarity across all pairs of traces:
\[
T \;=\; \frac{1}{n^2}\sum_{x < y \le n} A(F_x, F_y),
\]
where \(A(F_x, F_y)\) is the pairwise similarity between traces \(x\) and \(y\), and $n$ is the total number of traces. This global average \(T\) serves as a baseline for typical similarity values within the dataset. We then set the CAST affinity threshold to \(T\) (i.e., the minimum similarity required to add a new trace to an intra-webpage cluster).
This approach ensures that the threshold automatically adapts to the overall range of similarities in the dataset, rather than being a manually tuned constant hyperparameter, thereby improving the robustness of the resulting clusters.

We apply the modified CAST algorithm separately to each webpage in the dataset to extract a set of characteristic traffic patterns. These extracted patterns form the foundation for all subsequent phases. 


\subsection{Anonymity Set Generation}
\label{sec:AS_gen}
Once we have extracted distinct traffic patterns from each webpage, as described in Section~\ref{sec:pattern_detection}, we proceed to cluster these patterns into anonymity sets. This phase organizes structurally similar patterns, regardless of their originating web pages, into groups that will later be reshaped using a shared regularization policy. In contrast to prior work that clusters at the webpage level~\cite{wang2014effective,nithyanand2014glove,shen2024real}, this fine-grained grouping reduces overhead and improves generalization.

Our anonymity sets are constructed to satisfy two key properties that underlie the provable security guarantee offered by our defense:

\begin{itemize}
    \item \textbf{k-anonymity:} Each anonymity set must contain at least $k$ distinct traffic patterns~\cite{byun2007efficient}. In this way, even if an adversary correctly identifies the anonymity set to which a trace belongs, they are left with at least $k$ possibilities, bounding the probability of correctly inferring the true pattern to at most $1/k$.
    \item \textbf{l-diversity:} The patterns within each anonymity set must originate from at least $l$ different webpages~\cite{machanavajjhala2007diversity}. This criterion ensures that even if the anonymity set is exposed, the attacker cannot infer the correct webpage simply because all patterns come from a single source.
\end{itemize}

For this phase, we adopt the $k$-anonymity-based clustering algorithm introduced in \textit{Palette}~\cite{shen2024real} as our base algorithm. In \textit{Palette}, anonymity sets are generated at the webpage level. The algorithm first computes the Traffic Aggregation Matrix for each trace of a webpage and then aggregates these into a \emph{super-matrix} by an element-wise maximum. These super-matrices are then used as the elements (i.e., the representatives of each webpage) for a $k$ anonymity-based clustering algorithm~\cite{byun2007efficient}, with the similarity between two webpages measured by computing the Euclidean distance between their corresponding super-matrices. The clustering itself follows a greedy strategy: it constructs one anonymity set at a time by first selecting an initial webpage and then repeatedly adding the webpage whose super-matrix is closest, based on Euclidean distance, to the current super-matrix of the partially constructed set. Once $k$ webpages have been assigned, the process repeats to form the next set. The output is a list of anonymity sets, each containing at least $k$ webpages. We adopt the same $k$ anonymity-based clustering algorithm as used in \textit{Palette}, but with two key modifications:

\textbf{Pattern-level clustering.} Our method operates at the pattern level rather than the webpage level. As explained in Section \ref{sec:pattern_detection}, a single webpage can produce multiple distinct traffic patterns due to various reasons; clustering at this finer granularity yields more homogeneous groups.  As an illustrative example, Figure~\ref{fig:pattern_vs_website} compares the impact of clustering at the webpage level versus the pattern level. For each fixed value of $k$, we conduct two separate experiments. In the first, we compute one supermatrix per webpage by taking the element-wise maximum over all TAMs from that webpage, and then apply the clustering algorithm from \textit{Palette} to these webpage-level supermatrices. In the second, we repeat the process at the pattern level by computing one supermatrix per extracted traffic pattern and clustering those instead, again using \textit{Palette}. In both cases, after clustering, we treat each cluster’s supermatrix as the defended version of all traces in the cluster and compute the bandwidth overhead across all the traces. This provides a proxy for how well the clustering captures homogeneity. As shown in the figure, clustering at the pattern level consistently yields lower overhead.


\begin{figure}[ht]
    \centering
    \includegraphics[width=0.48\textwidth]{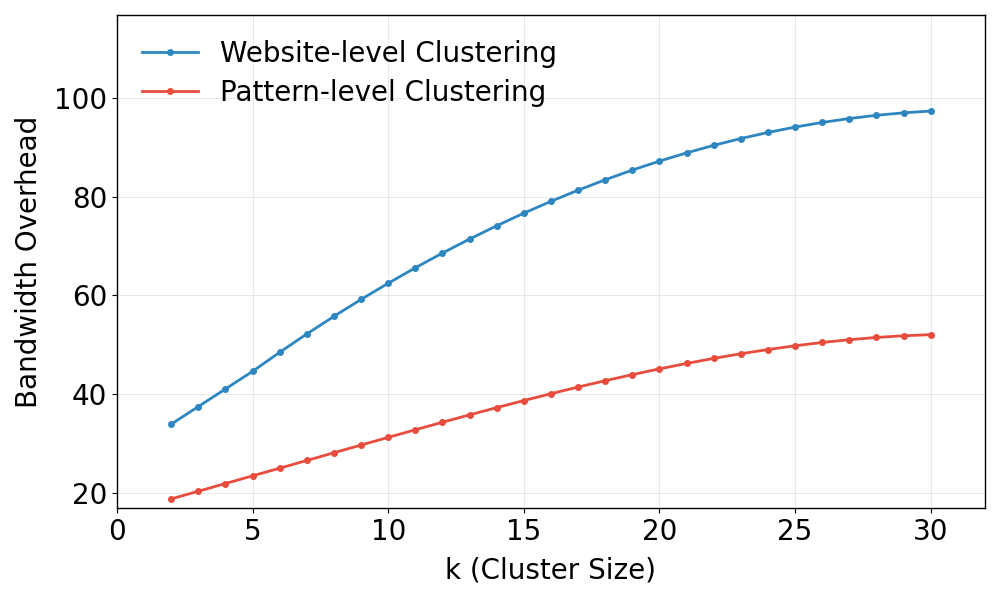}
    \caption{
    Comparison of website-level vs. pattern-level clustering. For each value of $k$, clustering is applied to supermatrices constructed at the website or pattern level, and the resulting average bandwidth overhead is measured. 
    Pattern-level clustering consistently results in lower overhead, especially as $k$ increases, indicating that clustering finer-grained traffic patterns captures homogeneity more effectively than aggregating at the website level.
    }

    \label{fig:pattern_vs_website}
\end{figure}

\textbf{Diversity-aware distance metric.} Instead of computing the Euclidean distance between
super-matrices, we define a distance function $d$ that directly captures attacker success after regularization. Given an anonymity set $C$ currently under construction and a candidate pattern $p$ (i.e., an intra-webpage cluster), we evaluate the impact of merging them by computing the average-case attacker success rate over the combined set $C' = C \cup p$.

For each pair of regularization parameters $(p_{\text{in}}, p_{\text{out}})$ in a predefined grid $\mathcal{P}$, we apply Tamaraw to all traces in $C'$ and compute the non-uniform attacker accuracy:

\[
  \bar{A}\bigl(C';p_{\text{in}},p_{\text{out}}\bigr)
  \;=\;
  \sum_{\ell}
    \frac{|C'_{\ell}|}{|C'|}
    \cdot
    \frac{
      \displaystyle
      \max_{w}\bigl|\{\,t\in C'_{\ell}:\mathrm{site}(t)=w\}\bigr|
    }{
      |C'_{\ell}|
    },
\]
\noindent
where $C'_{\ell}$ denotes the subset of traces in $C'$ whose regularized lengths equal $\ell$. This measures the expected success rate of an optimal attacker who predicts the most frequent webpage label in each length bucket.

The distance $d(C, p)$ is then defined as the average attacker accuracy over the entire grid:

\[
  d(C,p)\;=\;
  \frac{1}{|\mathcal{P}|}
  \sum_{(p_{\text{in}},p_{\text{out}})\in\mathcal{P}}
    \bar{A}\bigl(C';p_{\text{in}},p_{\text{out}}\bigr).
\]

This formulation implicitly enhances $l$-diversity, since adding a pattern from a new webpage typically reduces the attacker’s success rate based on the webpage in the majority, thus lowering $d(C,p)$. We provide empirical validations for this attribute in Appendix~\ref{appendix:cluster_diversity}. During clustering, we therefore always merge the candidate $p$ that minimizes $d(C,p)$, encouraging both homogeneity in regularized shape and diversity in origin.  Algorithm~\ref{alg:pattern_clustering} provides detailed pseudocode of our proposed anonymity set generation algorithm.  

\begin{algorithm}[htb]
\caption{Pattern-Level Anonymity Set Generation}
\label{alg:pattern_clustering}
\textbf{Input:}\\
\quad $P = \{p_1,\dots,p_N\}$: all extracted patterns\\
\quad $k$: anonymity parameter\\
\quad $d(S,p)$: distance function between anonymity set $S$ and pattern $p$ \\
\textbf{Output:}\\
\quad $\mathcal{S}=\{S_1,\dots,S_m\}$: final anonymity sets
\begin{algorithmic}[1]
\State $\mathcal{S} \gets \varnothing$ \Comment{Anonymity sets}
\State $U \gets P$ \Comment{the set of unassigned patterns}
\Statex \textbf{Step 1: Seed the first set}
\State Choose any $p_{\text{seed}} \in U$
\State $S_1 \gets \{p_{\text{seed}}\}$; $\mathcal{S} \gets \{S_1\}$; 
       $U \gets U \setminus \{p_{\text{seed}}\}$
\Statex \textbf{Step 2: Build additional sets}
\For{$i = 1$ to $\left\lfloor \frac{|P|}{k} \right\rfloor$}
    \If{$|S_i| < k$}
        \While{$|S_i| < k$ \textbf{and} $U \neq \varnothing$}
            \State $p^\star \gets \arg\min_{p \in U} d(S_i, p)$
            \State $S_i \gets S_i \cup \{p^\star\}$;
            $U \gets U \setminus \{p^\star\}$
        \EndWhile
    \EndIf
    \If{$U \neq \varnothing$}
        \State $p_{\text{new}} \gets \arg\max_{p \in U} \sum_{S \in \mathcal{S}} d(S, p)$
        \State Create $S_{i+1} \gets \{p_{\text{new}}\}$;
        $\mathcal{S} \gets \mathcal{S} \cup \{S_{i+1}\}$;
        $U \gets U \setminus \{p_{\text{new}}\}$
    \EndIf
\EndFor
\Statex \textbf{Step 3: Assign remaining patterns}
\ForAll{$p \in U$}
    \State $j \gets \arg\min_{S \in \mathcal{S}} d(S, p)$
    \State $S_j \gets S_j \cup \{p\}$
\EndFor
\State \Return $\mathcal{S}$
\end{algorithmic}
\end{algorithm}

The two previous phases (pattern extraction and anonymity set generation) comprise the offline training stage of our framework. A high-level illustration of this process is shown in Figure~\ref{fig:overview}, which depicts how traffic traces from different webpages are transformed into anonymity sets based on structured patterns.

\begin{figure}[ht]
    \centering
    \includegraphics[width=0.48\textwidth]{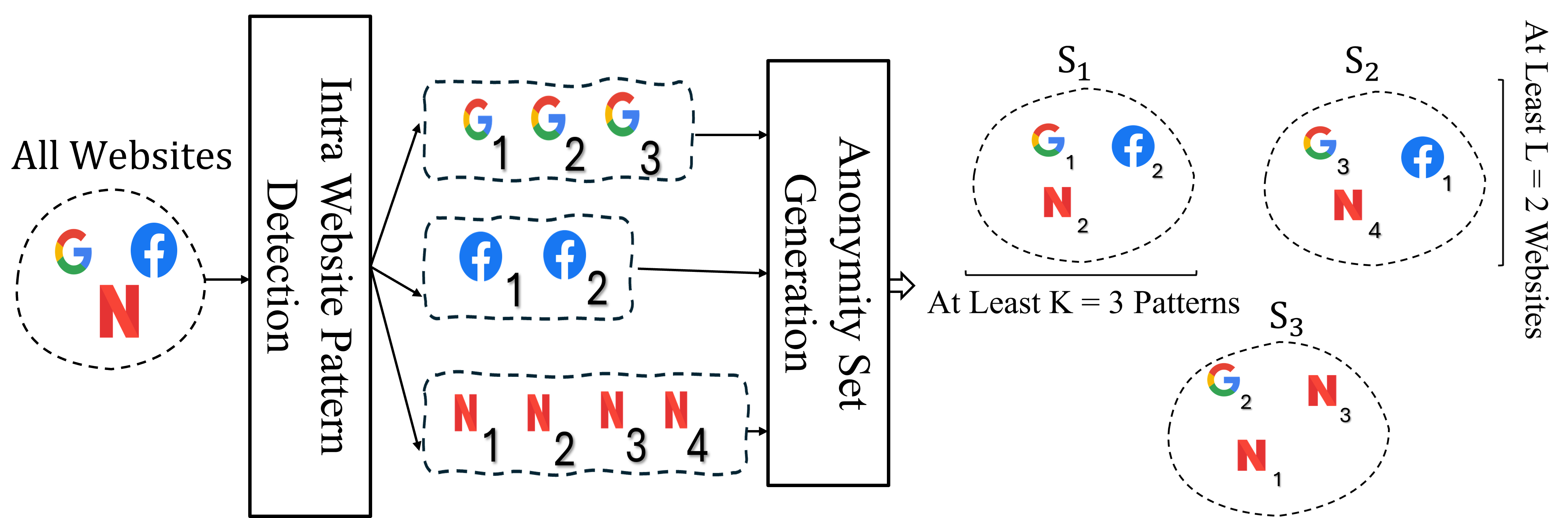}
    \caption{High–level workflow of the first two phases of our defense.  
  \textbf{1.~Pattern extraction.}  
  For each webpage in the training dataset, we group its traces into a small number of stable, recurring traffic patterns (dashed boxes), reflecting variability due to CDNs, localization, and user behavior.  
\textbf{2.~Anonymity set construction.}  
The extracted patterns are then clustered across different webpages to form anonymity sets. In this example, each set contains at least $k = 3$ distinct patterns originating from at least $l = 2$ different webpages, thereby satisfying $k$-anonymity and $l$-diversity. A lightweight, cluster-specific regularization schedule is precomputed for each set.}
    \label{fig:overview}
\end{figure}


\subsection{Early Anonymity Set Detection}
\label{sec:early_pred}


As we mentioned, our defense follows a \emph{global-to-local} regularization strategy.  
Every new page load is protected by singular global regularization parameters until enough packets have arrived to identify the anonymity set that best matches the live prefix; once the set is known, we switch to lighter, local regularization parameters.


For the global-to-local switch, deciding \emph{which} set to choose and \emph{when} it is safe to switch is an instance of \textbf{early time-series classification}, a task where the goal is to make accurate predictions based on incomplete sequences, as early as possible.  
Early classification has been studied extensively in latency-critical domains such as medical diagnosis~\cite{he2014early,dachraoui2015early} and industrial process monitoring~\cite{lv2019effective,ahn2020development}.
In this work, we adopt the ECDIRE framework~\cite{mori2017reliable}, which has been designed for reliable early classification of time series. 
ECDIRE learns, for every class (in our case anonymity set), the earliest \emph{safe timestamp} at which the class can be predicted with at least a user-selected fraction~$\alpha$ of the accuracy that would be achieved on complete traces.  
During deployment the classifier is queried only at those safe timestamps, which removes unnecessary tests on very short prefixes and guarantees that a decision is issued as soon as the chosen confidence threshold is met.

For every anonymity set \(\mathcal{S}\) we compute a safe timestamp
\(\tau_{\mathcal{S}}\) as the earliest prefix length (i.e., time) \(t\) at which a
validator (described below) can distinguish \(\mathcal{S}\) from all
other sets with accuracy \(\ge\alpha\,A_{\mathcal{S}}^{\text{full}}\),
where \(A_{\mathcal{S}}^{\text{full}}\) is the accuracy of the validator on full traces of \(\mathcal{S}\).  We never test
\(\mathcal{S}\) before \(\tau_{\mathcal{S}}\); conversely, if the trace
is \emph{not} accepted at \(\tau_{\mathcal{S}}\), 
\(\mathcal{S}\) is never considered again for that trace.
This “single‑shot’’ rule forces every trace that joins
\(\mathcal{S}\) to switch defenses at the same fixed time, closing a
potential timing side channel.

In the original ECDIRE algorithm~\cite{mori2017reliable}, the training process initially involves training a separate classifier at each timestep. Once training is complete, a post-processing step identifies the safe prediction timestamps, the points in time where early classification is both accurate and confident. The final model then retains only the classifiers corresponding to these selected timestamps, discarding the rest.
In our case, WF classifiers are deep networks, so to train and retain a full model for each timestamp would be prohibitively expensive in time and memory.
Instead we factor the prediction task into a single deep backbone plus a collection
of lightweight per-site models as follows.

\begin{enumerate}[leftmargin=13pt,itemsep=2pt]
\item \textbf{Stage A – webpage predictor (Holmes).}  
      We train one \emph{Holmes} network~\cite{deng2024robust}, a spatial–temporal
CNN encoder learned with supervised contrastive loss so that partial
traces embed close to their full counterparts.
At inference time Holmes outputs the most likely webpage \(w\) given an incomplete input trace.

\item \textbf{Stage B – Pattern predictor (per‑site k‑fingerprinting).}  
      For each webpage \(w\) and each time stamp \(t\) of interest, we train a small k fingerprinting (kFP) random forest classifier~\cite{hayes2016k}, denoted \(\text{kFP}_{w}^{(t)}\), using the traces of the site truncated at time \(t\) and the labels of each trace being the number of the cluster intra-site (i.e. pattern) of that trace. kFP requires only a modest number of examples, making it well-suited for scenarios where per-site data is limited.

\end{enumerate}

Given an incoming trace prefix of length \(t\), the prediction process proceeds as follows:
\begin{enumerate}[leftmargin=13pt,itemsep=2pt]
  \item Holmes predicts the most likely webpage \(w\).
  \item  The corresponding kFP model \(\text{kFP}_{w}^{(t)}\) identifies the pattern \(p\).
  \item By construction, the pair \((w, p)\) uniquely determines an anonymity set \(\mathcal{S}_{w,p}\). If \(t = \tau_{\mathcal{S}_{w,p}}\) (i.e., the current timestamp is the \emph{safe prediction timestamp} for that anonymity set), we immediately switch to that set’s lighter parameters; otherwise, we continue applying the global regularization schedule.
\end{enumerate}

After training the single Holmes model and all \(\text{kFP}_{w}^{(t)}\) classifiers across timestamps, we determine a safe timestamp \(\tau_{\mathcal{S}}\) for each anonymity set \(\mathcal{S}\), indicating the earliest point at which a confident classification can be made. For each webpage \(w\), we keep only the \(\text{kFP}_{w}^{(t)}\) models corresponding to its safe timestamps. The Holmes model is site-agnostic and trained once, then reused across all traces and timesteps. This design preserves ECDIRE’s early-decision capability while reducing computational cost by avoiding deep model training at every time step. Figure~\ref{fig:early_pred} illustrates this switching process and its alignment with safe timestamps.





\begin{figure}[t]
    \centering
    \includegraphics[width=\linewidth]{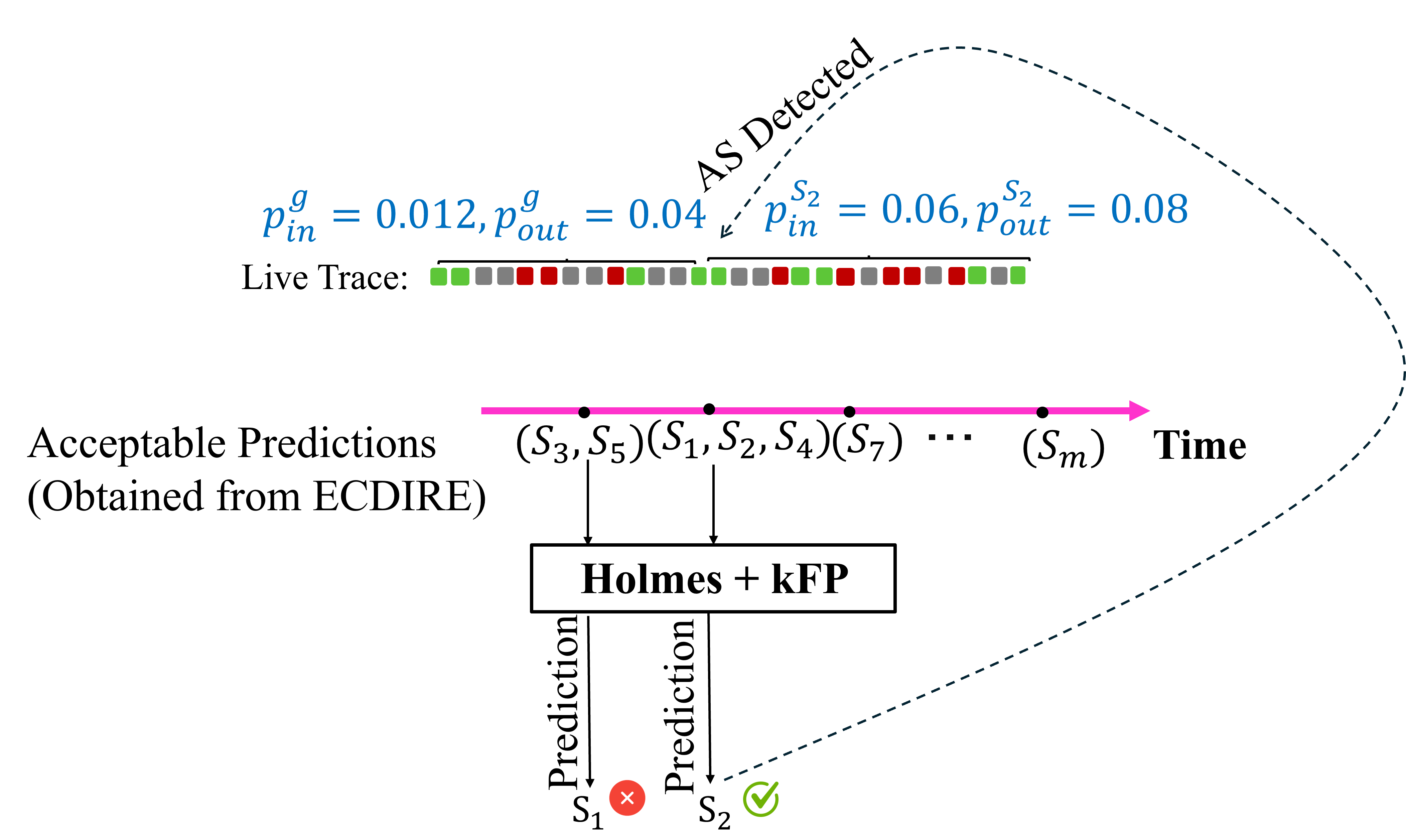}
   \caption{
Illustration of early anonymity set prediction and parameter switching.
An incoming trace is initially regularized using the global parameters \((p_{\text{in}}^g = 0.012, p_{\text{out}}^g = 0.04)\) in this example.
At each predefined safe timestamp, the classifier attempts to assign the trace to one of the candidate anonymity sets. 
In the first attempt, the classifier predicts \(S_1\), but no switch occurs because \(S_1\) is not valid at that timestamp. 
At the second safe timestamp, the trace is matched with \(S_2\), which is an acceptable set at that point. 
This triggers a transition to \(S_2\)'s per-set lighter parameters \((p_{\text{in}}^{S_2} = 0.06, p_{\text{out}}^{S_2} = 0.08)\), which are applied for the rest of the connection. 
Each anonymity set is tied to a unique safe timestamp, so switching can occur at most once.
}
\label{fig:early_pred}

\end{figure}


\subsection{Security Bound for \textit{Adaptive Tamaraw}}
\label{sec:security-bound}

Despite transitioning from a fixed global defense to lighter, AS-specific configurations after a safe prediction point, Adaptive Tamaraw retains a provable security guarantee. Specifically, we show that the defense remains \emph{non-uniformly weighted} \(\delta\)-non-injective, ensuring that the attacker’s average success probability is formally bounded. This bound reflects the attacker's expected success probability across the entire distribution of defended traffic, aligned with standard metrics from prior work, including Tamaraw ~\cite{cai2014systematic}, which included non-uniform (i.e., average) security.

\begin{theorem}[Global Non-Uniformly Weighted \(\delta\)–Non-Injectivity]
\label{thm:delta-global-maintext}
Let \(\mathcal{S}\) be the set of anonymity sets constructed in Section~\ref{sec:AS_gen}. For a fixed global regularization parameter pair \((p_{\text{in}}, p_{\text{out}})\), let \(\bar{A}(S_i; p_{\text{in}}, p_{\text{out}})\) denote the expected attacker success rate over anonymity set \(S_i \in \mathcal{S}\), as defined in Section~\ref{sec:max_accuracy} (Eq.~5). Define the global injectivity parameter \(\delta\) as:
\[
\frac{1}{\delta} = \mathbb{E}_{S_i \sim \mathcal{S}} \left[\bar{A}(S_i; p_{\text{in}}, p_{\text{out}})\right].
\]
Then Adaptive Tamaraw is non-uniformly weighted \(\delta\)-non-injective, and the attacker’s average success probability is bounded by:
\[
\Pr[\text{success}] \;\le\; \frac{1}{\delta}.
\]
\end{theorem}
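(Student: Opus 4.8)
The plan is to show that the attacker's success probability conditioned on any individual defended trace is governed entirely by the anonymity set that trace was assigned to, and then take an expectation over sets. First I would observe that by construction of the early-detection mechanism (Section~\ref{sec:early_pred}), every trace that reaches a defended output is mapped, via the pair $(w,p)$, to a unique anonymity set $S_i$, and -- because of the single-shot switching rule tied to each set's safe timestamp $\tau_{S_i}$ -- all traces assigned to $S_i$ switch to $S_i$'s local parameters at the \emph{same} fixed time. This is the crucial point: the timing of the switch carries no additional information that the attacker can exploit to split $S_i$ further, so the only residual distinguishing feature within $S_i$ is the regularized length (exactly as in classic Tamaraw), and the partition of defended outputs is refined no finer than the partition into anonymity sets.

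Next I would invoke the per-set analysis. For a fixed global parameter pair $(p_{\text{in}}, p_{\text{out}})$, the quantity $\bar{A}(S_i; p_{\text{in}}, p_{\text{out}})$ from Eq.~5 is precisely the average-case success rate of the optimal (majority-vote per length bucket) attacker restricted to traces in $S_i$: it sums, over length buckets $\ell$, the bucket's probability mass times the majority-webpage share in that bucket. Writing $\mathcal{F}'_{S_i}$ for the defended outputs originating from $S_i$, this says
\[
\mathbb{E}_{f' \in \mathcal{F}'_{S_i}}\!\left[\frac{1}{\tilde{\delta}(f')}\right] \;=\; \bar{A}(S_i; p_{\text{in}}, p_{\text{out}}).
\]
Then I would apply the tower property of expectation over the choice of anonymity set: since the output distribution decomposes as a mixture over the $S_i$ with the appropriate weights, the global average attacker success is
\[
\mathbb{E}_{f' \in \mathcal{F}'}\!\left[\frac{1}{\tilde{\delta}(f')}\right]
= \mathbb{E}_{S_i \sim \mathcal{S}}\!\left[\,\mathbb{E}_{f' \in \mathcal{F}'_{S_i}}\!\left[\frac{1}{\tilde{\delta}(f')}\right]\right]
= \mathbb{E}_{S_i \sim \mathcal{S}}\!\left[\bar{A}(S_i; p_{\text{in}}, p_{\text{out}})\right]
= \frac{1}{\delta},
\]
which is exactly the non-uniformly weighted $\delta$-non-injectivity condition from Section~\ref{sec:problem_statement}, and the stated bound $\Pr[\text{success}] \le 1/\delta$ follows immediately from the definition.

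\textbf{Main obstacle.} The routine algebra (the mixture decomposition and the tower property) is not where the difficulty lies; the real work is justifying that the defended-output partition is \emph{not} finer than the anonymity-set partition -- i.e., that neither the early-prefix behaviour under the global parameters, nor the timestamp at which the switch occurs, nor any misclassification event in Stage~A/Stage~B leaks a feature that lets the attacker subdivide an anonymity set. I would handle the timing concern with the single-shot rule already emphasised in Section~\ref{sec:early_pred} (all members of $S_i$ switch at $\tau_{S_i}$). The subtler issue is that a trace whose true pattern lies in $S_i$ might be \emph{misassigned} to $S_j$ by the classifier; I would need to argue either that such a trace is then genuinely governed by $S_j$'s output distribution (so it simply contributes to $\mathcal{F}'_{S_j}$ and the per-set bound still applies with the realized assignment rather than the ground-truth one), or that the definition of $\bar{A}$ in Section~\ref{sec:max_accuracy} is taken over the \emph{realized} assignments so the accounting is automatically consistent. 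Making this bookkeeping airtight -- and confirming that the global-phase prefix, being identical in shape across all traces by Tamaraw's constant-rate design, contributes nothing distinguishing -- is the one place the proof must be careful rather than mechanical.
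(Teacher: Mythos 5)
Your proposal follows essentially the same route as the paper: your per-set step is the paper's Lemma~\ref{lem:delta-i} (the majority-vote-per-length-bucket attacker achieves exactly $\bar{A}(S_i)$), and your tower-property step is precisely the paper's expansion of $\mathbb{E}_{S_i}[\bar{A}(S_i)]$ into $\mathbb{E}_{f'\sim\mathcal{F}'}\bigl[1/\tilde{\delta}(f')\bigr]$ via the identity $P(f'_{i,\ell}) = P(S_i)\,|S_{i,\ell}|/|S_i|$. The one caveat is that the misclassification bookkeeping you flag as the main obstacle is not actually resolved in the paper's proof either, which implicitly treats every trace as governed by its assigned set's parameters and does not separately account for Stage-A/Stage-B errors.
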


The formal proof of this result is presented in Appendix~\ref{appendix:proof}.

This result is intuitive: the attacker's overall bounded success rate is the average across the bound for each combination of anonymity set and trace length. 
Any two traces from the same set and of the same length are made indistinguishable by the defense, a principle inherited from Tamaraw. Consequently, the attacker’s optimal strategy in an anonymity set reduces to guessing the most common webpage in each such length-matched group, which limits their average success rate captured by \(\bar{A}(S_i)\). Our novelty lies in deriving the overall security bound by averaging these individual success probabilities across all possible anonymity sets.

\section{Experimental Evaluation}
\label{sec:experiments}

This section empirically evaluates the effectiveness, efficiency, and generalizability of Adaptive Tamaraw. The goal is to show that the defense lowers bandwidth and latency overhead while maintaining provable bounds on attacker success, even under strong adversaries. Section~\ref{sec:exp_setup} outlines the experimental setup; Section~\ref{exp:closed_world} tests performance on in-training webpages (defended sites seen during training); Section~\ref{exp:open_world} examines robustness on out-of-training webpages with unseen traffic; and Section~\ref{sec:max_accuracy} checks the tightness of theoretical bounds from Section~\ref{sec:security-bound} against observed attack accuracies.

\subsection{Experimental Setup}
\subsubsection*{Dataset}
\label{sec:exp_setup}

 We conduct our experiments on two well-known public WF datasets. The first is collected by Sirinam et al.~\cite{sirinam2018deep}, and it has become a standard benchmark in recent WF research~\cite{shen2023subverting, shen2024real, zhou2023wf, de2020trafficsliver}. For our experiments, we used its \textit{ closed-world} subset, which contains 1,000 traffic traces for each of the 95 monitored websites. The second is the large-scale Automated Website Fingerprinting (AWF) dataset from Rimmer et al.~\cite{rimmer2017automated}. From this collection, we use the subset corresponding to the top 100 most popular websites, which provides 2,500 traffic traces per website. Specifically, both data sets consist of traces collected over live Tor connections, reflecting authentic and variable network conditions, including congestion, jitter, and circuit diversity.

\subsubsection*{Parameter Grid for Adaptive Tamaraw}

Tamaraw is parameterized by three key hyperparameters: the upstream and downstream padding intervals $(\rho_{\text{out}}, \rho_{\text{in}})$, and the length bucket parameter $L$: all traces are padded to multiples of $L$.
A higher $L$ increases the likelihood that multiple defended traces will collapse to the same length, reducing fingerprintability. 
We consider three values for $L$: $\{100, 500, 1000\}$. 



For each $L$, we explore a grid of candidate values for $\rho_{\text{in}}$ and $\rho_{\text{out}}$. Given that incoming traffic typically arrives more frequently than outgoing traffic during webpage loading~\cite{cai2014systematic, holland2020regulator}, we sweep $\rho_{\text{in}}$ in the range $[0.001, 0.006]$ and $\rho_{\text{out}}$ in the range $[0.005, 0.21]$. 
We retain only the Pareto-optimal configurations (those for which no other configuration achieves strictly lower bandwidth and time overhead simultaneously \cite{cai2014systematic}). This selection yields 33 unique $(\rho_{\text{in}}, \rho_{\text{out}})$ pairs for the Sirinam et al. dataset and 40 for the AWF dataset, that span the optimal trade-off frontier between efficiency and security. A detailed report of these configurations is provided in Appendix~\ref{appendix:pareto}. 

\subsubsection*{Model Setup}

As discussed in Section~\ref{sec:early_pred}, our early anonymity set detection relies on a two-stage pipeline: a Holmes model for webpage-level prediction, followed by a lightweight k-fingerprinting (kFP) model for identifying fine-grained traffic patterns. Full details on model architectures are provided in Appendix~\ref{appendix:architectures}.

\subsubsection*{Hyperparameter Configuration}
\label{sec:hyperparams}

For optimal performance and fair evaluation, we performed a grid search over each hyperparameter. Table~\ref{tab:hyperparams} lists the key parameters of Adaptive Tamaraw. The first three ($k$, Max Intra-Webpage Patterns, and $\alpha$) minimize overhead while maintaining security, and the others maximize prediction accuracy for early anonymity set detection.



\begin{table}[htb]
\centering
\caption{Hyperparameters for Adaptive Tamaraw with grid search spaces.}
\label{tab:hyperparams}
\small
\begin{tabular}{|l|c|c|}
\hline
\textbf{Parameter} & \textbf{Value} & \textbf{Search Space} \\
\hline
\parbox{3.5cm}{\centering $K$ (local scaling)} & 7 & \parbox{3cm}{\centering 1--10} \\
\hline
\parbox{3.8cm}{\centering Max Intra Webpage Patterns} & 6 & \parbox{3cm}{\centering 2--8} \\
\hline
\parbox{3.5cm}{\centering $\alpha$ (ECDIRE)} & 0.9 & \parbox{3cm}{\centering \{0.6,0.7,0.8,0.9, 1\}} \\
\hline
\parbox{3.5cm}{\centering Holmes lr} & 5e-4 & \parbox{3cm}{\centering \{5e-6,5e-5,5e-4,5e-3\}} \\
\hline
\parbox{3.5cm}{\centering TAM Time Slot} & 80ms & \parbox{3cm}{\centering \{40ms, 80ms, 120ms \}} \\
\hline
\parbox{3.5cm}{\centering Holmes batch} & 256 & \parbox{3cm}{\centering \{32,64,128,256\}} \\
\hline
\parbox{3.5cm}{\centering Holmes epochs} & 80 & \parbox{3cm}{\centering fixed} \\
\hline
\end{tabular}
\end{table}






\subsection{In-Training Evaluation}
\label{exp:closed_world}
In the in-training setting, we evaluate the effectiveness of our defense in protecting webpages that were explicitly included in the defense's training dataset.
 Specifically, we conduct our evaluation on the monitored websites from both the Sirinam et al. and AWF datasets. These sites serve as our set of protected targets. Following prior work~\cite{shen2024real, shen2023subverting}, we partition the dataset into training, validation, and testing sets using an 8:1:1 ratio. At a high level, our full defense pipeline proceeds as follows. We begin by applying the modified CAST algorithm \ref{sec:pattern_detection} to the training traces of each webpage to extract recurring traffic patterns. These patterns are then clustered using the $k$-anonymity-based algorithm introduced in Section~\ref{sec:AS_gen} to obtain our anonimity-sets (we perform the experiments by varying the minimum required anonymity set size $k$ from 2 to 30). For each of the obtained anonymity-sets, we identify a safe timestamp using early time-series classification as described in Section~\ref{sec:early_pred}, allowing us to switch from a global Tamaraw configuration to a cluster-specific configuration as early as possible. Finally, we compare the performance against applying the classical Tamaraw to the traces.

For a fixed global configuration $(\rho_{\text{in}}^g, \rho_{\text{out}}^g)$ selected from the Pareto-optimal grid introduced in Section~\ref{sec:exp_setup}, we determine cluster-specific regularization parameters for each anonymity set. Specifically, we sweep over the full grid of $(\rho_{\text{in}}, \rho_{\text{out}})$ pairs and select the configuration that achieves lower bandwidth and time overheads on the traces within the anonymity set, compared to $(\rho_{\text{in}}^g, \rho_{\text{out}}^g)$. This ensures that the local configuration provides a clear improvement over the global baseline for the assigned subset of traces.

We also train the Holmes model on the same training traces to perform early webpage prediction, and independently train a set of kFP classifiers for each webpage to support fine-grained pattern prediction. Using the ECDIRE procedure (described in Section~\ref{sec:early_pred}), we derive a safe prediction time for each $(\text{webpage}, \text{pattern})$ pair. We set $\alpha = 0.9$, meaning a prediction is considered safe once it reaches 90\% of the classification accuracy achieved when the classifier is trained on complete traces. This value was selected empirically: we evaluated a range of $\alpha$ values from 0.5 to 1, and compared the resulting overhead improvements. The experiments showed that $\alpha = 0.9$ yielded the best overhead improvements compared to the original tamaraw.

We proceed to evaluate the performance of Adaptive Tamaraw using the held-out test set. For each test trace $i$, and for each global Pareto regularization pair $(\rho_{\text{in}}^g, \rho_{\text{out}}^g)$, we first apply Tamaraw with the global parameters and record the resulting bandwidth and time overheads. Next, we evaluate the same trace under the complete adaptive Tamaraw strategy: we begin with $(\rho_{\text{in}}^g, \rho_{\text{out}}^g)$ and switch to the specific configuration of the anonymity set $(\rho_{\text{in}}^{\mathcal{S}_{i}}, \rho_{\text{out}}^{\mathcal{S}_{i}})$ once the predicted anonymity set becomes identifiable according to its safe prediction time. We then record the resulting overheads. We repeat this procedure for all the global Pareto regularization pairs $(\rho_{\text{in}}^g, \rho_{\text{out}}^g)$.

Table~\ref{tab:overhead_comparison} presents the average overheads (time and bandwidth, each averaged over all global regularization pairs) for static Tamaraw versus Adaptive Tamaraw at various cluster sizes \(k\), across three bucket lengths \(L \in \{100, 500, 1000\}\) for both datasets.. 
On the Sirinam et al. dataset, the savings are substantial: when $L=1000$, Adaptive Tamaraw reduces the average overheads from 258\% bandwidth and 199\% time overhead to 223\% and 135\% respectively at $k=2$, a combined reduction of 99.0 percentage points. For the AWF dataset, we observe that while the baseline overhead for static Tamaraw is generally higher, the overhead reductions from our defense are still pronounced. For instance, at $L=1000$ and $k=2$ on AWF, the average overheads drop from 182\% and 207\% to 145\% and 154\%, a reduction of 90.0 percentage points.

The choice of $k$ controls the defense's effectiveness: smaller values result in a looser bound on attacker accuracy (fully explored in Section~\ref{sec:max_accuracy}) while producing the smallest overhead. Interestingly, Adaptive Tamaraw results in higher overhead savings with smaller $k$, i.e. 
{\em when the anonymity sets cover fewer intra-webpage patterns}. 
An intuitive explanation is that in high $k$ settings, each anonymity set is large and therefore does not benefit as much from localized, set-specific adaptive parameters.
More comprehensive results across a broader range of \(k\) values are provided in Appendix~\ref{appendix:overhead-improvement}.

\begin{table*}
\centering
\caption{Average overheads (time and bandwidth) comparison between static Tamaraw and Adaptive Tamaraw across multiple cluster sizes \(k\) and bucket lengths \(L\). Each entry shows the absolute overhead percentage, with the difference from Tamaraw in parentheses. Adaptive Tamaraw consistently reduces overhead, especially for small values of \(k\).}
\label{tab:overhead_comparison}
\small
\resizebox{\textwidth}{!}{%
\begin{tabular}{|l|c|cc|cc|cc|cc|}
\toprule
\multirow{3}{*}{\parbox{2cm}{\centering\textbf{Dataset}}} & 
\multirow{3}{*}{\parbox{1.5cm}{\centering\textbf{L}}} & 
\multicolumn{2}{c|}{\textbf{Tamaraw (Baseline)}} & 
\multicolumn{6}{c|}{\textbf{Adaptive Tamaraw}} \\
\cline{3-10}
& & \multicolumn{2}{c|}{} & \multicolumn{2}{c|}{\textbf{\parbox[c][0.4cm]{1.5cm}{k = 2}}} & \multicolumn{2}{c|}{\textbf{k = 7}} & \multicolumn{2}{c|}{\textbf{k = 30}} \\
\cline{3-10}
& & \textbf{\parbox[c][0.6cm]{0.8cm}{BW}} & \textbf{Time} & \textbf{BW} & \textbf{Time} & \textbf{BW} & \textbf{Time} & \textbf{BW} & \textbf{Time} \\
\midrule
\multirow{3}{*}{\parbox{2cm}{\centering Sirinam et al.}} 
& \parbox[c][0.4cm]{1.5cm}{\centering 100} & 158\% & 83\% & 136\% (\textbf{-22}) & 68\% (\textbf{-15}) & 144\% (\textbf{-14}) & 74\% (\textbf{-9}) & 157\% (\textbf{1}) & 77\% (\textbf{-6}) \\
\cline{2-10}
& \parbox[c][0.4cm]{2cm}{\centering 500} & 198\% & 98\% & 176\% (\textbf{-22}) & 86\% (\textbf{-12}) & 184\% (\textbf{-14}) & 86\% (\textbf{-11}) & 196\% (\textbf{-2}) & 87\% (\textbf{-11}) \\
\cline{2-10}
& \parbox[c][0.4cm]{1.5cm}{\centering 1000} & 258\% & 199\% & 223\% (\textbf{-35}) & 135\% (\textbf{-64}) & 235\% (\textbf{-23}) & 144\% (\textbf{-55}) & 248\% (\textbf{-10}) & 150\% (\textbf{-49}) \\
\midrule
\multirow{3}{*}{\parbox{2cm}{\centering AWF}} 
& \parbox[c][0.4cm]{1.5cm}{\centering 100} & 151\% & 153\% & 100\% (\textbf{-51}) & 111\% (\textbf{-42}) & 109\% (\textbf{-42}) & 123\% (\textbf{-30}) & 125\% (\textbf{-26}) & 135\% (\textbf{-18}) \\
\cline{2-10}
& \parbox[c][0.4cm]{1.5cm}{\centering 500} & 157\% & 183\% & 122\% (\textbf{-35}) & 147\% (\textbf{-36}) & 132\% (\textbf{-25}) & 155\% (\textbf{-28}) & 141\% (\textbf{-16}) & 172\% (\textbf{-6}) \\
\cline{2-10}
& \parbox[c][0.4cm]{1.5cm}{\centering 1000} & 182\% & 207\% & 145\% (\textbf{-37}) & 154\% (\textbf{-53}) & 157\% (\textbf{-25}) & 162\% (\textbf{-45}) & 169\% (\textbf{-13}) & 182\% (\textbf{-25}) \\
\bottomrule
\end{tabular}%
}
\end{table*}

\textbf{Per-Trace Analysis.} Beyond aggregate averages, we examine how Adaptive Tamaraw impacts overhead on a per-trace basis. To this end, we selected a representative global Pareto configuration, $(\rho_{\text{in}} = 0.012, \rho_{\text{out}} = 0.04)$, which was also reported in the original Tamaraw paper~\cite{cai2014systematic}, and offers a balanced trade-off between bandwidth and time overhead. We applied both static Tamaraw and Adaptive Tamaraw (with $k = 7$ and $L = 100$) to each test trace in the Sirinam et al. dataset using this configuration as the global parameters. The resulting average bandwidth and time overheads for static Tamaraw were 106\% and 53\%, respectively. With Adaptive Tamaraw, these dropped to 95\% and 47\%, corresponding to an overall overhead reduction of approximately 17\%.  We measured the difference in total overhead between the two methods for each trace and plotted the distribution.

Figure~\ref{fig:per_trace_savings} shows the resulting distribution of per-trace total overhead savings. Positive values indicate cases where Adaptive Tamaraw yielded lower overhead than static Tamaraw, while negative values represent regressions. Although some traces exhibit higher overhead under adaptation, the overall trend is clearly beneficial: the average saving is positive and approximately 10\% of traces experience savings exceeding 100\%, even reaching up to 500\%. These results demonstrate Adaptive Tamaraw’s ability to reduce overhead, particularly significant for traces that would otherwise be heavily penalized by a one-size-fits-all padding strategy.



Furthermore, to assess how reliably Adaptive Tamaraw identifies the correct anonymity set at runtime, we evaluated the performance of the combined Holmes + kFP classifier used in the ECDIRE procedure. On average, the correct anonymity set was identified in 81\% of test traces. In 10\% of cases, no set was chosen, resulting in the defense completing using the original Tamaraw parameters. The remaining 9\% of traces were assigned to an incorrect anonymity set. 

This per-trace analysis underscores a key advantage of adaptation: by tailoring the padding schedule to individual traffic characteristics as soon as sufficient information becomes available, our approach delivers substantial efficiency gains in a fine-grained and targeted manner.


\begin{figure}[t]
    \centering
    \includegraphics[width=\linewidth]{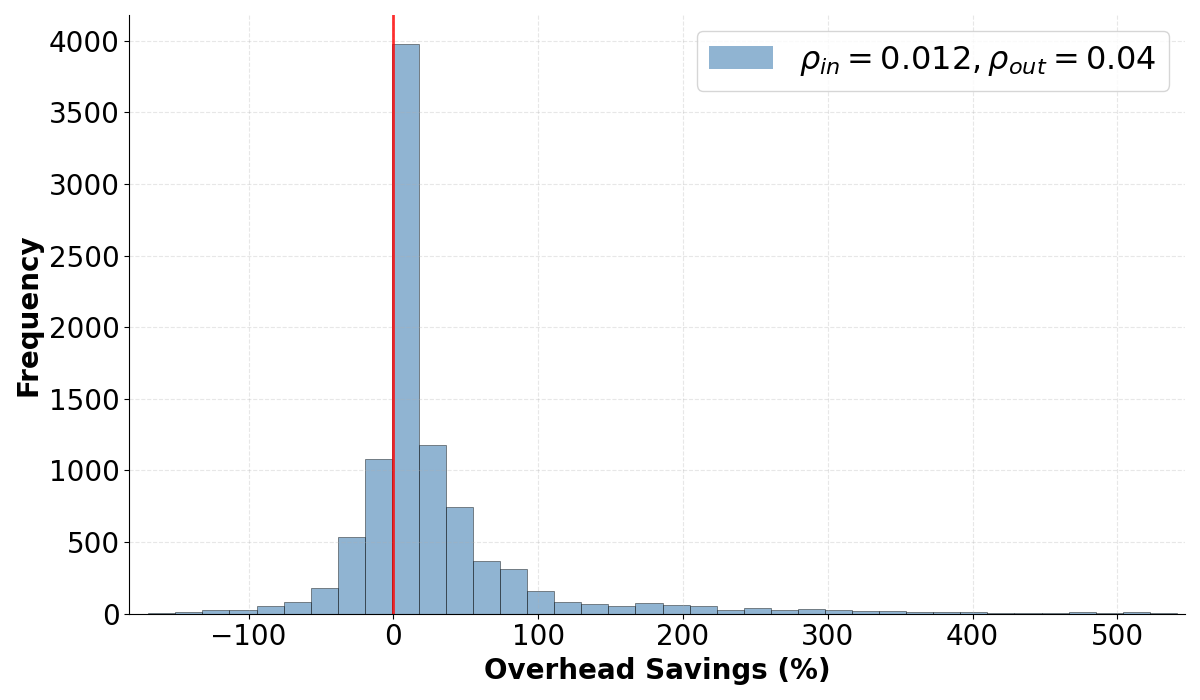}
    \caption{
Distribution of per-trace overhead savings achieved by Adaptive Tamaraw over static Tamaraw for one representative global padding configuration: $(\rho_{\text{in}} = 0.012, \rho_{\text{out}} = 0.04)$,  with $k = 7$ and $L = 100$.The red vertical line at 0\% indicates the point where both methods incur equal overhead; values to the right indicate savings from adaptation, and those to the left indicate higher cost.
    \label{fig:per_trace_savings}
}
\end{figure}

\subsection{Out-of-Training Evaluation}
\label{exp:open_world}
In the out-of-training setting, our goal is to assess how well Adaptive Tamaraw generalizes to website traffic that was not included during the training phase. This scenario is particularly important, as prior anonymity set–based defenses~\cite{nithyanand2014glove, wang2014effective, shen2024real} are typically constrained to protect only \emph{in-training} webpages, limiting their applicability in real-world browsing environments where users may visit previously unseen pages.

We split the Sirinam et al. closed-world dataset, which comprises 95 webpages, into two disjoint subsets of webpages. One subset (containing approximately half of the webpages) serves as the training set for the defense, while the other subset is reserved for the out-of-training evaluation, simulating previously unseen webpages.


For this experiment, we instantiate Adaptive Tamaraw with $k = 7$ and $L = 100$, a configuration that strikes a practical balance between efficiency and protection. The choice of $L = 100$, inspired by prior work~\cite{cai2014systematic}, provides reasonable security guarantees while avoiding the excessive overhead associated with larger values. Meanwhile, $k = 7$ offers an effective trade-off: it yields notable reductions in bandwidth and time overhead (Table~\ref{tab:overhead_comparison}) and maintains low prediction accuracy within anonymity sets (Section~\ref{sec:max_accuracy}) (i.e., under this setting, the maximum theoretical success rate of any attacker remains below 50\%).

We follow the same training procedure as in the in-training setting, including pattern extraction, anonymity set generation, classifier training, and safe time  on the first half of webpages.
During evaluation of the testing traces, the defense initially applies the global Tamaraw parameters from Section~\ref{sec:exp_setup} and then attempts to assign the trace to one of the anonymity sets, constructed exclusively from the training webpages, at any of the precomputed safe times. If such an assignment is possible, the defense switches to the corresponding anonymity set–specific parameters for the remainder of the trace.


To compare the efficiency of different defense strategies under practical user experience constraints, we evaluate their performance across a series of fixed \textbf{time overhead constraints}. For each constraint (e.g., requiring time overhead to be less than 10\%), we identify the defense configuration from our Pareto-optimal set that yields the lowest possible bandwidth overhead while respecting that time limit. This methodology allows us to answer a practical question: Given a maximum acceptable delay, what is the minimum bandwidth cost? We report these values for three scenarios: (i) the original Tamaraw defense, (ii) Adaptive Tamaraw on out-of-training webpages, and (iii) Adaptive Tamaraw on in-training webpages.

Table~\ref{tab:ow-bandwidth-time} presents the results of this analysis, where each column represents a different time overhead constraint. The values in the table show the minimum bandwidth overhead achieved by the best configuration that satisfies the given time constraint. Adaptive Tamaraw on in-training webpages shows substantial gains over the original Tamaraw, especially under tight time constraints. For instance, when the time overhead is limited to 10\%, Adaptive Tamaraw (in-training) achieves a reduction of 52 percentage points in bandwidth overhead compared to the baseline. Notably, these improvements are more pronounced at lower time overhead thresholds, which is particularly practical: minimizing additional delay is often crucial for preserving user experience.

\begin{table*}[htbp]
\centering
\caption{Bandwidth overhead for Tamaraw and Adaptive Tamaraw under fixed time overhead constraints in both in-training and out-of-training settings. Parentheses show percentage reductions relative to Tamaraw. Adaptive Tamaraw consistently reduces minimum and average bandwidth overhead, with largest reductions in-training.}

\label{tab:ow-bandwidth-time}

\small
\resizebox{0.68\textwidth}{!}{%
\begin{tabular}{l|cccc|c}
\toprule
\multirow{2}{*}{\parbox{2.5cm}{\centering\textbf{Fixed Time Overhead}}} & 
\multicolumn{4}{c|}{\textbf{Minimum Bandwidth Overhead (\%)}} & 
\textbf{Average Bandwidth} \\
& \textbf{$<$ 10\%} & \textbf{$<$ 45\%} & \textbf{$<$ 125\%} & \textbf{$<$ 250\%} & \textbf{Overhead (AUC)} \\
\midrule
\parbox{2.5cm}{\centering Tamaraw} & 279 & 124 & 83 & 66 & 82 \\
\midrule
\parbox{2.5cm}{\centering Adaptive Tamaraw (out-of-training)} & 277 (\textbf{-2}) & 119 (\textbf{-5}) & 80 (\textbf{-3}) & 64 (\textbf{-2}) & 79 (\textbf{-3} )\\
\midrule
\parbox{2.5cm}{\centering Adaptive Tamaraw (in-training)} & 227 (\textbf{-52}) & 110 (\textbf{-14}) & 72 (\textbf{-11}) & 61 (\textbf{-5}) & 68 (\textbf{-14}) \\
\bottomrule
\end{tabular}%
}
\end{table*}


In the out-of-training setting, where all test traces are from previously unseen webpages, Adaptive Tamaraw still manages to slightly outperform the global Tamaraw strategy. The reductions are modest (e.g., 2--7 percentage points), but consistent across all thresholds. This suggests that even for webpages not present during training, partial trace similarity to known anonymity sets can still yield meaningful gains.

Along with minimum bandwidth values, we report the average bandwidth overhead, computed as the normalized area under the Pareto curve. Both in-training and out-of-training versions of Adaptive Tamaraw show lower average overhead than the original Tamaraw. When Adaptive Tamaraw is trained on webpages the user would visit, it achieves its best performance, but it still outperforms Tamaraw even if it is not.

\subsection{Attacker Accuracy}
\label{sec:max_accuracy}

\subsubsection{Theoretical Bounds}

In this section, we provide a numerical evaluation of the average success probability bound stated in Theorem~\ref{thm:delta-global}, which measures the level of security offered by Adaptive Tamaraw. We use the anonymity sets generated from the in-training evaluation (Section~\ref{exp:closed_world}).

For each fixed combination of anonymity set size $k$ and Tamaraw's bucket length parameter $L$, we consider the corresponding anonymity sets and evaluate the success probability bound for all 33 Pareto-optimal $(\rho_{\text{in}}, \rho_{\text{out}})$ pairs identified in Section~\ref{sec:exp_setup}. For each pair, we compute the right-hand side of the inequality in Theorem~\ref{thm:delta-global-maintext}. 
We then average these bounds across all 33 configurations to obtain a success bound for each $(k, L)$ pair.


Figure~\ref{fig:max_acc_vs_k_l} presents the computed theoretical upper bounds on the attack success probability, as described in Theorem~\ref{thm:delta-global}, across varying anonymity set sizes $k$ and for three different values of the bucket length parameter $L \in \{100, 500, 1000\}$. As expected, the attacker's success probability decreases with larger anonymity set sizes, confirming that increasing $k$ heightens adversarial uncertainty and weakens passive WF attacks. Interestingly, the drop is not inversely proportional to $k$, indicating that larger $k$ leads to sets containing many traces from the same site. Increasing $L$ (padding length) also lowers maximum attacker accuracy, for example, at $k=7$, success falls from about 45\% with $L=100$ to 35\% with $L=1000$, though with substantial overhead, as shown in Table~\ref{tab:overhead_comparison}.


\begin{figure}[t]
    \centering
    \includegraphics[width=\linewidth]{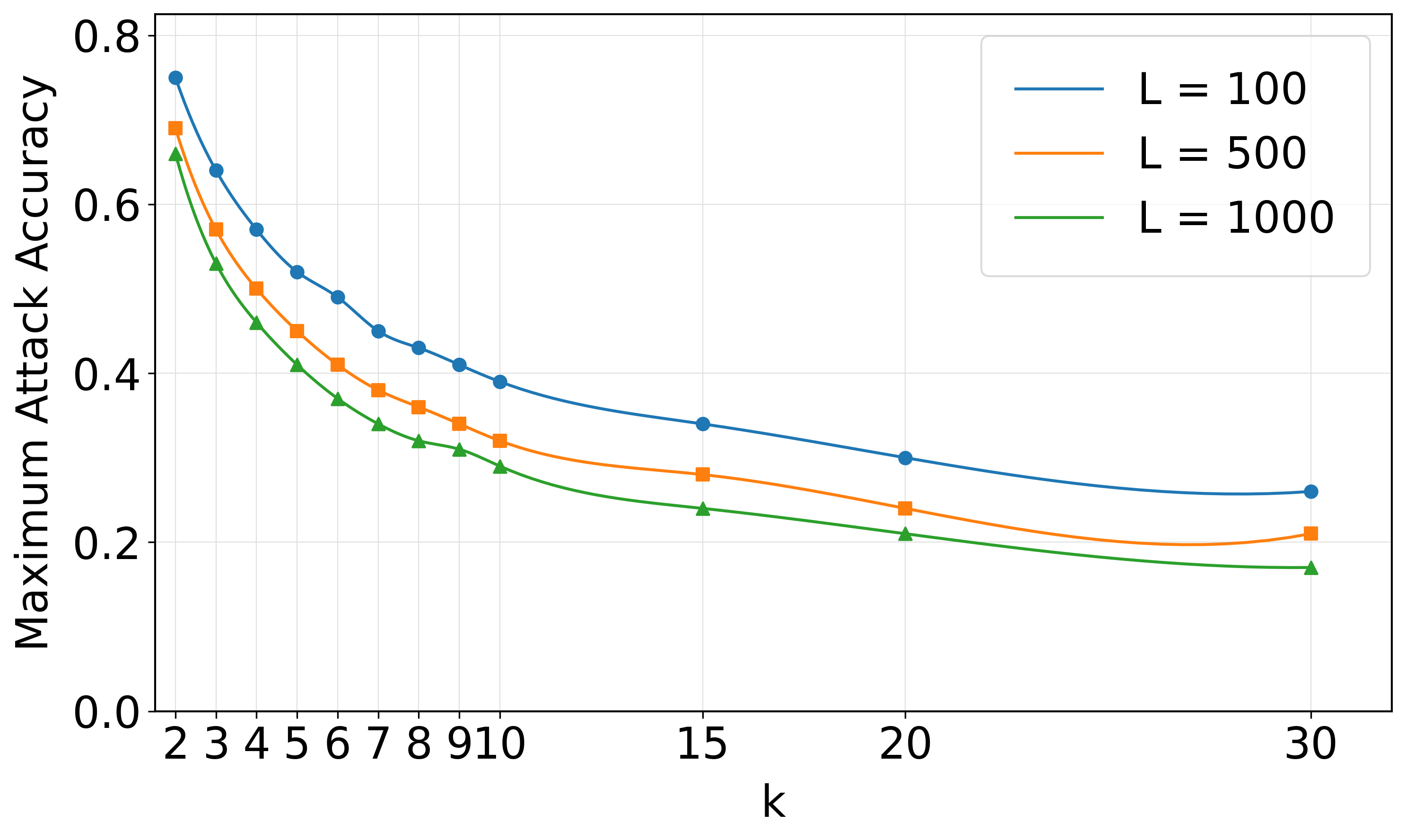}
    \caption{Formal upper bound on attacker accuracy for different anonymity-set sizes~$k$ and bucket-length parameters~$L$ (Section \ref{sec:security-bound}). Increasing $k$ enlarges each trace's pre-image, while larger~$L$ suppresses size-based leakage. Together they tighten the bound from $\approx\!0.75$ at $(k,L)=(2,100)$ to below~$0.17$ at $(k,L)=(30,1000)$, illustrating the security-overhead trade-off.}
    \label{fig:max_acc_vs_k_l}
\end{figure}

We also evaluated the theoretical bound in the \textit{out-of-training} scenario with $k = 7$ and $L = 100$. We assigned each trace to its predicted anonymity set, computed the corresponding bound, and averaged across all traces. This yielded a maximum attacker accuracy of 31\%, confirming the theoretical framework remains valid for unseen webpages. This bound is lower than the \textit{in-training} counterpart in Figure~\ref{fig:max_acc_vs_k_l} (45\%), possibly because the defense reverts toward original Tamaraw behavior out-of-training, as evidenced by modest bandwidth improvements in Table~\ref{tab:ow-bandwidth-time}. This produces more uniform parameters and homogeneous traffic patterns, lowering attack success rates.

\subsubsection{Empirical Attacks}

To validate the theoretical success probability bounds derived in the previous section, we evaluate the effectiveness of state-of-the-art WF attacks on traces defended by Adaptive Tamaraw, and compare against the corresponding theoretical upper bounds to assess the tightness of the bound presented in Theorem~6.2. We consider four widely-used and high-performing WF attack models: \textbf{kFP}~\cite{hayes2016k}, \textbf{Tik-Tok} ~\cite{rahman2019tik}, \textbf{RF} ~\cite{shen2023subverting}, and \textbf{LASERBEAK}~\cite{mathews2024laserbeak}.


In alignment with Section~\ref{exp:open_world}, we use the practical configuration of $k = 7$ and $L = 100$. We randomly sample four $(\rho_{\text{in}}, \rho_{\text{out}})$ pairs from the 33 Pareto-optimal global configurations and apply Adaptive Tamaraw to the training traces using each selected pair. 
We then conduct adversarial training using the defended traces and evaluate the attack success rate of each WF model on the corresponding test set. The training protocols and hyperparameters for all four models are the same mentioned in their original paper.

Table~\ref{tab:empirical-vs-theoretical} compares the empirical classification accuracy of the four WF attack models against the theoretical upper bounds. Across all evaluated cases, we observe that the empirical success rate of each model remains consistently below the theoretical bound, where RF is the model achieving the closest results to the bound.
 This shows that the bound provided in Theorem~6.2 is useful as a principled estimate of attack performance.



\begin{table}[t]
    \centering
    \caption{Theoretical maximum attacker accuracy bound versus empirical accuracy of website fingerprinting attacks for four randomly selected $(\rho_{\text{out}}, \rho_{\text{in}})$ padding rate pairs with $k = 7$ and $L = 100$.}

    \label{tab:empirical-vs-theoretical}
    \begin{tabular}{@{}cccccccc@{}}
        \toprule
        & \multicolumn{2}{c}{Padding rates} & \multicolumn{5}{c}{Accuracy (\%)} \\
        \cmidrule(lr){2-3}\cmidrule(l){4-8}
        & $\rho_{\text{out}}$ & $\rho_{\text{in}}$ & Bound & kFP & Tik-Tok & RF & LASERBEAK \\ \midrule
        & 0.012 & 0.030 & 41 & 31 & 20 & 39 & 31 \\
        & 0.009 & 0.010 & 43 & 33 & 21 & 38 & 32 \\
        & 0.010 & 0.020 & 42 & 31 & 22 & 40 & 30 \\
        & 0.030 & 0.120 & 35 & 20 & 15 & 33 & 23 \\ \bottomrule
    \end{tabular}
\end{table}


\section{Discussion and Conclusion}
\label{sec:discussion}

We proposed a defense framework combining regularization and supersequence methods with three components: pattern-level clustering, $(k,l)$-diverse anonymity sets, and early-time switching. Instantiated as \textit{Adaptive Tamaraw}, it maintains provable guarantees while adapting to trace structure. Based on our experiments, adaptive Tamaraw offers meaningful gains when tested on webpages it was trained on, and performs similarly to the original Tamaraw on webpages it was not trained on. As $k$ increases, the defense approaches classic Tamaraw, enabling control over the privacy--overhead trade-off (Table~\ref{tab:overhead_comparison}, Figure~\ref{fig:max_acc_vs_k_l}).

While we instantiate our approach using Tamaraw due to its formal security bound,  our approach is general and can be extended to other defenses that support fixed regularization. Each stage of our approach presents opportunities for future research. For instance, more advanced representation learning could enhance pattern detection, while novel clustering algorithms could further optimize anonymity set generation. That said, the computational analysis in Appendix~\ref{appendix:costs} shows that our current instantiation of Adaptive Tamaraw adds negligible latency and memory overhead on standard hardware. This makes it suitable for practical use in the Tor ecosystem, such as a Pluggable Transport in the WFDefProxy framework~\cite{gong2023wfdefproxy}. With inference latency under 2~ms, it can classify and adjust padding parameters in real time without computational bottlenecks, making it a viable option for Tor Browser integration.

\bibliographystyle{IEEEtran}
\bibliography{references}

@inproceedings{dingledine2004tor,
  title={Tor: The second-generation onion router.},
  author={Dingledine, Roger and Mathewson, Nick and Syverson, Paul F and others},
  booktitle={USENIX security symposium},
  volume={4},
  pages={303--320},
  year={2004}
}

@inproceedings{wang2014effective,
  title={Effective attacks and provable defenses for website fingerprinting},
  author={Wang, Tao and Cai, Xiang and Nithyanand, Rishab and Johnson, Rob and Goldberg, Ian},
  booktitle={23rd USENIX Security Symposium (USENIX Security 14)},
  pages={143--157},
  year={2014}
}

@inproceedings{nithyanand2014glove,
  title={Glove: A bespoke website fingerprinting defense},
  author={Nithyanand, Rishab and Cai, Xiang and Johnson, Rob},
  booktitle={Proceedings of the 13th Workshop on Privacy in the Electronic Society},
  pages={131--134},
  year={2014}
}

@inproceedings{shen2023subverting,
  title={Subverting website fingerprinting defenses with robust traffic representation},
  author={Shen, Meng and Ji, Kexin and Gao, Zhenbo and Li, Qi and Zhu, Liehuang and Xu, Ke},
  booktitle={32nd USENIX Security Symposium (USENIX Security 23)},
  pages={607--624},
  year={2023}
}

@inproceedings{shen2024real,
  title={Real-time website fingerprinting defense via traffic cluster anonymization},
  author={Shen, Meng and Ji, Kexin and Wu, Jinhe and Li, Qi and Kong, Xiangdong and Xu, Ke and Zhu, Liehuang},
  booktitle={2024 IEEE Symposium on Security and Privacy (SP)},
  pages={3238--3256},
  year={2024},
  organization={IEEE}
}

@article{mathews2024laserbeak,
  title={LASERBEAK: Evolving Website Fingerprinting Attacks with Attention and Multi-Channel Feature Representation},
  author={Mathews, Nate and Holland, James K and Hopper, Nicholas and Wright, Matthew},
  journal={IEEE Transactions on Information Forensics and Security},
  year={2024},
  publisher={IEEE}
}

@inproceedings{cai2014systematic,
  title={A systematic approach to developing and evaluating website fingerprinting defenses},
  author={Cai, Xiang and Nithyanand, Rishab and Wang, Tao and Johnson, Rob and Goldberg, Ian},
  booktitle={Proceedings of the 2014 ACM SIGSAC conference on computer and communications security},
  pages={227--238},
  year={2014}
}

@inproceedings{wang2017walkie,
  title={$\{$Walkie-Talkie$\}$: An efficient defense against passive website fingerprinting attacks},
  author={Wang, Tao and Goldberg, Ian},
  booktitle={26th USENIX Security Symposium (USENIX Security 17)},
  pages={1375--1390},
  year={2017}
}

@article{juarez2015wtf,
  title={WTF-PAD: toward an efficient website fingerprinting defense for tor},
  author={Ju{\'a}rez, Marc and Imani, Mohsen and Perry, Mike and D{\i}az, Claudia and Wright, Matthew},
  journal={CoRR, abs/1512.00524},
  year={2015}
}

@inproceedings{gong2020zero,
  title={Zero-delay lightweight defenses against website fingerprinting},
  author={Gong, Jiajun and Wang, Tao},
  booktitle={29th USENIX security symposium (USENIX security 20)},
  pages={717--734},
  year={2020}
}

@inproceedings{gong2022surakav,
  title={Surakav: Generating realistic traces for a strong website fingerprinting defense},
  author={Gong, Jiajun and Zhang, Wuqi and Zhang, Charles and Wang, Tao},
  booktitle={2022 IEEE Symposium on Security and Privacy (SP)},
  pages={1558--1573},
  year={2022},
  organization={IEEE}
}

@inproceedings{sirinam2018deep,
  title={Deep fingerprinting: Undermining website fingerprinting defenses with deep learning},
  author={Sirinam, Payap and Imani, Mohsen and Juarez, Marc and Wright, Matthew},
  booktitle={Proceedings of the 2018 ACM SIGSAC conference on computer and communications security},
  pages={1928--1943},
  year={2018}
}

@article{rahman2019tik,
  title={Tik-tok: The utility of packet timing in website fingerprinting attacks},
  author={Rahman, Mohammad Saidur and Sirinam, Payap and Mathews, Nate and Gangadhara, Kantha Girish and Wright, Matthew},
  journal={arXiv preprint arXiv:1902.06421},
  year={2019}
}

@article{wang2016realistically,
  title={On realistically attacking Tor with website fingerprinting},
  author={Wang, Tao and Goldberg, Ian},
  journal={Proceedings on Privacy Enhancing Technologies},
  year={2016}
}

@inproceedings{hayes2016k,
  title={k-fingerprinting: A robust scalable website fingerprinting technique},
  author={Hayes, Jamie and Danezis, George},
  booktitle={25th USENIX Security Symposium (USENIX Security 16)},
  pages={1187--1203},
  year={2016}
}

@inproceedings{byun2007efficient,
  title={Efficient k-anonymization using clustering techniques},
  author={Byun, Ji-Won and Kamra, Ashish and Bertino, Elisa and Li, Ninghui},
  booktitle={International conference on database systems for advanced applications},
  pages={188--200},
  year={2007},
  organization={Springer}
}

@article{machanavajjhala2007diversity,
  title={l-diversity: Privacy beyond k-anonymity},
  author={Machanavajjhala, Ashwin and Kifer, Daniel and Gehrke, Johannes and Venkitasubramaniam, Muthuramakrishnan},
  journal={Acm transactions on knowledge discovery from data (tkdd)},
  volume={1},
  number={1},
  pages={3--es},
  year={2007},
  publisher={ACM New York, NY, USA}
}

@article{sweeney2002k,
  title={k-anonymity: A model for protecting privacy},
  author={Sweeney, Latanya},
  journal={International journal of uncertainty, fuzziness and knowledge-based systems},
  volume={10},
  number={05},
  pages={557--570},
  year={2002},
  publisher={World Scientific}
}

@inproceedings{panchenko2016website,
  title={Website Fingerprinting at Internet Scale.},
  author={Panchenko, Andriy and Lanze, Fabian and Pennekamp, Jan and Engel, Thomas and Zinnen, Andreas and Henze, Martin and Wehrle, Klaus},
  booktitle={NDSS},
  volume={1},
  pages={23477},
  year={2016}
}

@inproceedings{dyer2012peek,
  title={Peek-a-boo, i still see you: Why efficient traffic analysis countermeasures fail},
  author={Dyer, Kevin P and Coull, Scott E and Ristenpart, Thomas and Shrimpton, Thomas},
  booktitle={2012 IEEE symposium on security and privacy},
  pages={332--346},
  year={2012},
  organization={IEEE}
}

@inproceedings{cai2014cs,
  title={Cs-buflo: A congestion sensitive website fingerprinting defense},
  author={Cai, Xiang and Nithyanand, Rishab and Johnson, Rob},
  booktitle={Proceedings of the 13th Workshop on Privacy in the Electronic Society},
  pages={121--130},
  year={2014}
}

@article{holland2020regulator,
  title={Regulator: A straightforward website fingerprinting defense},
  author={Holland, James K and Hopper, Nicholas},
  journal={arXiv preprint arXiv:2012.06609},
  year={2020}
}

@article{aghabozorgi2014hybrid,
  title={A hybrid algorithm for clustering of time series data based on affinity search technique},
  author={Aghabozorgi, Saeed and Ying Wah, Teh and Herawan, Tutut and Jalab, Hamid A and Shaygan, Mohammad Amin and Jalali, Alireza},
  journal={The Scientific World Journal},
  volume={2014},
  number={1},
  pages={562194},
  year={2014},
  publisher={Wiley Online Library}
}

@article{mori2017reliable,
  title={Reliable early classification of time series based on discriminating the classes over time},
  author={Mori, Usue and Mendiburu, Alexander and Keogh, Eamonn and Lozano, Jose A},
  journal={Data mining and knowledge discovery},
  volume={31},
  pages={233--263},
  year={2017},
  publisher={Springer}
}

@inproceedings{deng2024robust,
  title={Robust and reliable early-stage Website fingerprinting attacks via spatial-temporal distribution analysis},
  author={Deng, Xinhao and Li, Qi and Xu, Ke},
  booktitle={Proceedings of the 2024 on ACM SIGSAC Conference on Computer and Communications Security},
  pages={1997--2011},
  year={2024}
}

@article{zhou2023wf,
  title={WF-transformer: Learning temporal features for accurate anonymous traffic identification by using transformer networks},
  author={Zhou, Qiang and Wang, Liangmin and Zhu, Huijuan and Lu, Tong and Sheng, Victor S},
  journal={IEEE Transactions on Information Forensics and Security},
  volume={19},
  pages={30--43},
  year={2023},
  publisher={IEEE}
}

@inproceedings{de2020trafficsliver,
  title={Trafficsliver: Fighting website fingerprinting attacks with traffic splitting},
  author={De la Cadena, Wladimir and Mitseva, Asya and Hiller, Jens and Pennekamp, Jan and Reuter, Sebastian and Filter, Julian and Engel, Thomas and Wehrle, Klaus and Panchenko, Andriy},
  booktitle={Proceedings of the 2020 ACM SIGSAC Conference on Computer and Communications Security},
  pages={1971--1985},
  year={2020}
}

@article{ravi2009survey,
  title={A survey on dynamic Web content generation and delivery techniques},
  author={Ravi, Jayashree and Yu, Zhifeng and Shi, Weisong},
  journal={Journal of Network and Computer Applications},
  volume={32},
  number={5},
  pages={943--960},
  year={2009},
  publisher={Elsevier}
}

@article{veverka2024impact,
  title={The impact of TV advertising on website traffic},
  author={Veverka, Luk{\'a}{\v{s}} and Hol{\`y}, Vladim{\'\i}r},
  journal={Applied Stochastic Models in Business and Industry},
  year={2024},
  publisher={Wiley Online Library}
}

@article{okonkwo2023localization,
  title={Localization and global marketing: Adapting digital strategies for diverse audiences},
  author={Okonkwo, Ifeanyi and Mujinga, John and Namkoisse, Emmanuel and Francisco, Adrien},
  journal={Journal of Digital Marketing and Communication},
  volume={3},
  number={2},
  pages={66--80},
  year={2023}
}

@article{zelnik2004self,
  title={Self-tuning spectral clustering},
  author={Zelnik-Manor, Lihi and Perona, Pietro},
  journal={Advances in neural information processing systems},
  volume={17},
  year={2004}
}

@inproceedings{wu2018random,
  title={Random warping series: A random features method for time-series embedding},
  author={Wu, Lingfei and Yen, Ian En-Hsu and Yi, Jinfeng and Xu, Fangli and Lei, Qi and Witbrock, Michael},
  booktitle={International Conference on Artificial Intelligence and Statistics},
  pages={793--802},
  year={2018},
  organization={PMLR}
}

@article{badiane2022empirical,
  title={An empirical evaluation of kernels for time series},
  author={Badiane, Mourtadha and Cunningham, P{\'a}draig},
  journal={Artificial Intelligence Review},
  volume={55},
  number={3},
  pages={1803--1820},
  year={2022},
  publisher={Springer}
}

@inproceedings{correa2012locally,
  title={Locally-scaled spectral clustering using empty region graphs},
  author={Correa, Carlos D and Lindstrom, Peter},
  booktitle={Proceedings of the 18th ACM SIGKDD international conference on Knowledge discovery and data mining},
  pages={1330--1338},
  year={2012}
}

@article{ng2001spectral,
  title={On spectral clustering: Analysis and an algorithm},
  author={Ng, Andrew and Jordan, Michael and Weiss, Yair},
  journal={Advances in neural information processing systems},
  volume={14},
  year={2001}
}

@inproceedings{tyuryukanov2017post,
  title={A post-processing methodology for robust spectral embedded clustering of power networks},
  author={Tyuryukanov, Ilya and Quir{\'o}s-Tort{\'o}s, Jairo and Nagli{\v{c}}, Matija and Popov, Marjan and van der Meijden, Mart AMM and Terzija, Vladimir},
  booktitle={IEEE EUROCON 2017-17th International Conference on Smart Technologies},
  pages={805--809},
  year={2017},
  organization={IEEE}
}

@article{sanchez2014hierarchical,
  title={Hierarchical spectral clustering of power grids},
  author={S{\'a}nchez-Garc{\'\i}a, Rub{\'e}n J and Fennelly, Max and Norris, Se{\'a}n and Wright, Nick and Niblo, Graham and Brodzki, Jacek and Bialek, Janusz W},
  journal={IEEE Transactions on Power Systems},
  volume={29},
  number={5},
  pages={2229--2237},
  year={2014},
  publisher={IEEE}
}

@inproceedings{dachraoui2015early,
  title={Early classification of time series as a non myopic sequential decision making problem},
  author={Dachraoui, Asma and Bondu, Alexis and Cornu{\'e}jols, Antoine},
  booktitle={Machine Learning and Knowledge Discovery in Databases: European Conference, ECML PKDD 2015, Porto, Portugal, September 7-11, 2015, Proceedings, Part I 15},
  pages={433--447},
  year={2015},
  organization={Springer}
}

@inproceedings{he2014early,
  title={Early classification on multivariate time series with core features},
  author={He, Guoliang and Duan, Yong and Zhou, Guofu and Wang, Lingling},
  booktitle={Database and Expert Systems Applications: 25th International Conference, DEXA 2014, Munich, Germany, September 1-4, 2014. Proceedings, Part I 25},
  pages={410--422},
  year={2014},
  organization={Springer}
}

@article{lv2019effective,
  title={An effective confidence-based early classification of time series},
  author={Lv, Junwei and Hu, Xuegang and Li, Lei and Li, Peipei},
  journal={IEEE Access},
  volume={7},
  pages={96113--96124},
  year={2019},
  publisher={IEEE}
}

@article{ahn2020development,
  title={Development of indicator of data sufficiency for feature-based early time series classification with applications of bearing fault diagnosis},
  author={Ahn, Gilseung and Lee, Hwanchul and Park, Jisu and Hur, Sun},
  journal={Processes},
  volume={8},
  number={7},
  pages={790},
  year={2020},
  publisher={MDPI}
}

@article{vaswani2017attention,
  title={Attention is all you need},
  author={Vaswani, Ashish and Shazeer, Noam and Parmar, Niki and Uszkoreit, Jakob and Jones, Llion and Gomez, Aidan N and Kaiser, {\L}ukasz and Polosukhin, Illia},
  journal={Advances in neural information processing systems},
  volume={30},
  year={2017}
}

@article{singhal2005clustering,
  title={Clustering multivariate time-series data},
  author={Singhal, Ashish and Seborg, Dale E},
  journal={Journal of Chemometrics: A Journal of the Chemometrics Society},
  volume={19},
  number={8},
  pages={427--438},
  year={2005},
  publisher={Wiley Online Library}
}

@article{rimmer2017automated,
  title={Automated website fingerprinting through deep learning},
  author={Rimmer, Vera and Preuveneers, Davy and Juarez, Marc and Van Goethem, Tom and Joosen, Wouter},
  journal={arXiv preprint arXiv:1708.06376},
  year={2017}
}

@inproceedings{overdorf2017unique,
  title={How unique is your. onion? an analysis of the fingerprintability of tor onion services},
  author={Overdorf, Rebekah and Juarez, Marc and Acar, Gunes and Greenstadt, Rachel and Diaz, Claudia},
  booktitle={Proceedings of the 2017 ACM SIGSAC Conference on Computer and Communications Security},
  pages={2021--2036},
  year={2017}
}

@article{gong2023wfdefproxy,
  title={Wfdefproxy: Real world implementation and evaluation of website fingerprinting defenses},
  author={Gong, Jiajun and Zhang, Wuqi and Zhang, Charles and Wang, Tao},
  journal={IEEE Transactions on Information Forensics and Security},
  volume={19},
  pages={1357--1371},
  year={2023},
  publisher={IEEE}
}


\appendices

\section{Model Architectures}
\label{appendix:architectures}

\textbf{Holmes.} We use the Holmes architecture from Deng et al.~\cite{deng2024robust}, a hybrid convolutional encoder with 2 blocks of Conv2D layers, followed by 4 blocks of Conv1D layers, trained with supervised contrastive learning to align partial traffic traces with full-trace embeddings for early-stage website fingerprinting.

\textbf{k-Fingerprinting (kFP).} We use kFP~\cite{hayes2016k}, a random forest with 100 trees per (website, safe time) pair, based on packet and timing statistics; fingerprints are leaf ID sequences classified by $k$-NN.


\section{Additional Experimental Results}
\label{appendix:addexperiments}
\subsection{Tamaraw Parameter Grid and Pareto Frontier}
\label{appendix:pareto}

To characterize the trade-off between bandwidth and time overhead in the Tamaraw defense, we constructed a grid of $(\rho_{\text{in}}, \rho_{\text{out}})$ pairs following the method recommended in the original Tamaraw paper~\cite{cai2014cs}. Specifically, we started with the original values used in that work: $\rho_{\text{in}}^{\text{init}} = 0.012$ and $\rho_{\text{out}}^{\text{init}} = 0.04$, and varied each parameter independently across 14 values, ranging from 7 times smaller to 7 times larger on a logarithmic scale. That is, each consecutive value differs from the previous by a multiplicative factor of \(e^{\ln(7)/7} \approx 1.316\). This results in a total of \(14 \times 14 = 196\) combinations of $(\rho_{\text{in}}, \rho_{\text{out}})$ pairs.

For each configuration, we applied Tamaraw  and measured the overheads on the traces. We kept only Pareto-optimal pairs, those where no other configuration achieves lower overhead in both metrics. Figure~\ref{fig:pareto_L_100} shows the 33 points fpr Sirinam et al. frontiers for \(L = 100\). Each red dot marks a Pareto-optimal \((\rho_{\text{in}}, \rho_{\text{out}})\) with its (bandwidth, time) overhead. 




\begin{figure}[htb]
  \centering
  \includegraphics[width=0.95\linewidth]{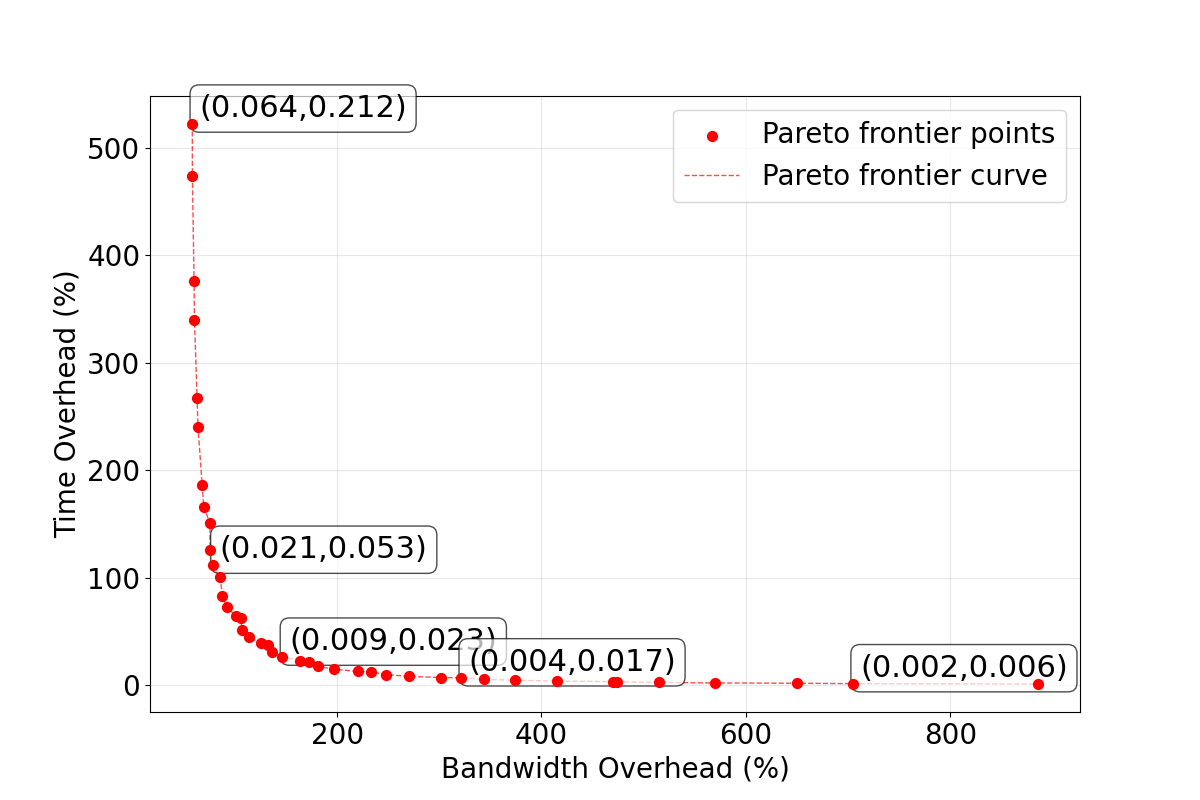}
  \caption{Pareto frontier for Tamaraw overheads computed from 196 parameter pairs for \(L = 100\). Each point corresponds to a $(\rho_{\text{in}}, \rho_{\text{out}})$ configuration that is Pareto-optimal with respect to bandwidth and time overhead.}
  \label{fig:pareto_L_100}
\end{figure}

\subsection{Security Bound Analysis}
\label{appendix:sec_bound}

Our security bound in Section \ref{sec:security-bound} is designed to bound the attacker's average success rate, assuming the attacker fully knows the trace and the underlying trace information, including the switching time $\tau_s$, the chosen anonymity set and the sequence length. 
Here we evaluate how much information is individually leaked by a $\tau_s$-aware adversary who, based on the observed switching time and post-switch rate pair, can infer the corresponding anonymity set.  We performed this evaluation across all 33 Pareto-optimal global configurations (with $L = 100$ and $k = 7$), comparing the empirical success rate of this adversary against the theoretical security bound derived in Section~\ref{sec:security-bound}. All defense parameters, including anonymity sets and safe timestamps, are derived from the training set, and the adversary's performance is measured on the held-out test set.

Our analysis reveals that the switching mechanism does indeed leak the anonymity set identity with high probability: on average, the correct anonymity set accounts for 95.64\% of the traces sharing a specific $(\tau_s, \text{rates})$ tuple. Our theoretical security bound has already assumed such leakage. As shown in Figure~\ref{fig:tau-aware-attacker}, the $\tau$-aware attacker's empirical accuracy consistently remains below the theoretical security bound across all configurations. This confirms that even when the adversary successfully identifies the anonymity set via the switching signal, the $k$-anonymity and $l$-diversity properties within that set remain sufficient to cap their success rate. Furthermore, a gap exists between the theoretical bound and the attacker's actual performance, which might also suggest that the presence of misclassified and non-switching traces introduces additional ambiguity, further confusing the adversary. 

\begin{figure}[htb]
  \centering
  \includegraphics[width=0.95\linewidth]{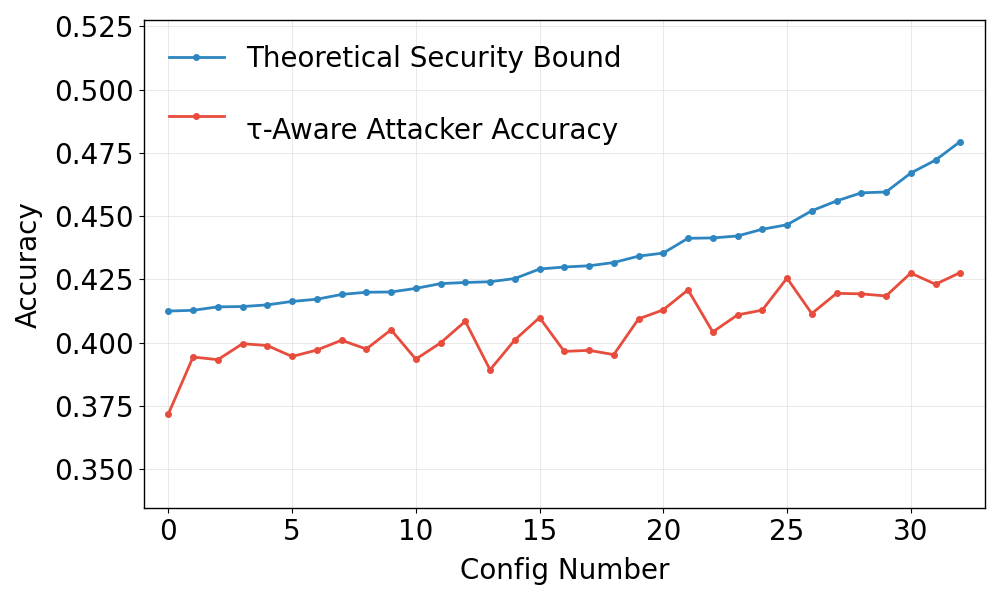}
  \caption{Comparison of the theoretical security bound and the empirical accuracy of a $\tau$-aware attacker across 33 global configurations ($L = 100, k = 7$). The attacker consistently performs below the bound, demonstrating that the defense remains robust even when the switching signal reveals the anonymity set.
}
  \label{fig:tau-aware-attacker}
\end{figure}

\subsection{Generalization to Onion Service Traffic}
 To further evaluate the robustness of our method in out-of-training conditions, we conduct an experiment using traces from onion services, which can exhibit different traffic characteristics than standard websites. We use the onion service dataset collected by Overdorf et al.~\cite{overdorf2017unique}, which contains traffic from distinct services, from which we use 100 sites. The experimental setup is the same as our primary out-of-training evaluation: we construct anonymity sets using only the Sirinam et al. dataset, and then apply both the static and adaptive defenses to the unseen onion traces to assess our method's ability to generalize to this entirely different domain. Table~\ref{tab:ow-bandwidth-time-onion} compares the minimum bandwidth overhead between Adaptive Tamaraw and static Tamaraw under fixed time overhead constraints. For ease of comparison, we have also included the results for normal out-of-training websites from Section VI.C. The data shows that Adaptive Tamaraw provides improvements in both scenarios. For normal out-of-training websites, the adaptive approach yields modest but consistent reductions in bandwidth overhead of up to 5 percentage points. Similarly, Adaptive Tamaraw reduces bandwidth overhead across all constraints on the onion service traffic. Under a 20\% time overhead limit, our approach reduces bandwidth overhead by 8 percentage points for onion websites, demonstrating that even with fundamentally different website traffic, our adaptive mechanism identifies structural similarities to provide improvements.

\begin{table}[htbp]
\centering
\caption{Comparison of minimum bandwidth overhead for Tamaraw and Adaptive Tamaraw under various fixed time overhead constraints when applied on out-of-training websites (both normal and onion websites). Values in parentheses represent percentage point reductions relative to Tamaraw.}
\label{tab:ow-bandwidth-time-onion}
\small
\begin{tabular}{l|l|cccc}
\toprule
\parbox{1cm}{\textbf{Website}} & \parbox{1.2cm}{\textbf{Method}} & \multicolumn{4}{c}{\parbox{4cm}{\centering\textbf{Minimum Bandwidth Overhead (\%)}}} \\
& & \parbox{0.7cm}{\centering\textbf{$<$20\%}} & \parbox{0.7cm}{\centering\textbf{$<$45\%}} & \parbox{0.8cm}{\centering\textbf{$<$125\%}} & \parbox{0.8cm}{\centering\textbf{$<$200\%}} \\
\midrule
\multirow{2}{0.8cm}{Normal} & Tamaraw & 168 & 124 & 83 & 67 \\
& Adaptive & 166 (\textbf{-2}) & 119 (\textbf{-5}) & 80 (\textbf{-3}) & 67 (\textbf{0}) \\
\midrule
\multirow{2}{0.8cm}{Onion} & Tamaraw & 195 & 120 & 78 & 66 \\
& Adaptive & 187 (\textbf{-8}) & 116 (\textbf{-4}) & 77 (\textbf{-1}) & 65 (\textbf{-1}) \\
\bottomrule
\end{tabular}
\end{table}

\subsection{Overhead Improvement Across Cluster Sizes}
\label{appendix:overhead-improvement}
 We present the overhead reduction achieved by Adaptive Tamaraw compared to Tamaraw depending on cluster sizes. For each test trace in the Sirinam et al. dataset, we compute the average improvement for bandwidth and time overhead individually and average the results. Figure~\ref{fig:overhead_vs_k_L} illustrates the average overhead improvement achieved by Adaptive Tamaraw over static global Tamaraw configurations, plotted as a function of the cluster size $k$. Across all values of $L$, we observe that Adaptive Tamaraw consistently reduces overhead compared to the baseline. The largest reductions (observed for $L = 1000$) are seen in the time overhead, which is cut by over 50 percentage points for small $k$. For smaller bucket lengths ($L = 100$ and $L = 500$), the savings are more modest but are more evenly distributed between both bandwidth and time improvements. In all scenarios, smaller anonymity sets (i.e., decreasing $k$) generally lead to greater improvements, and furthermore when $L$ is large.


\begin{figure}[htb]
  \centering
  \includegraphics[width=0.95\linewidth]{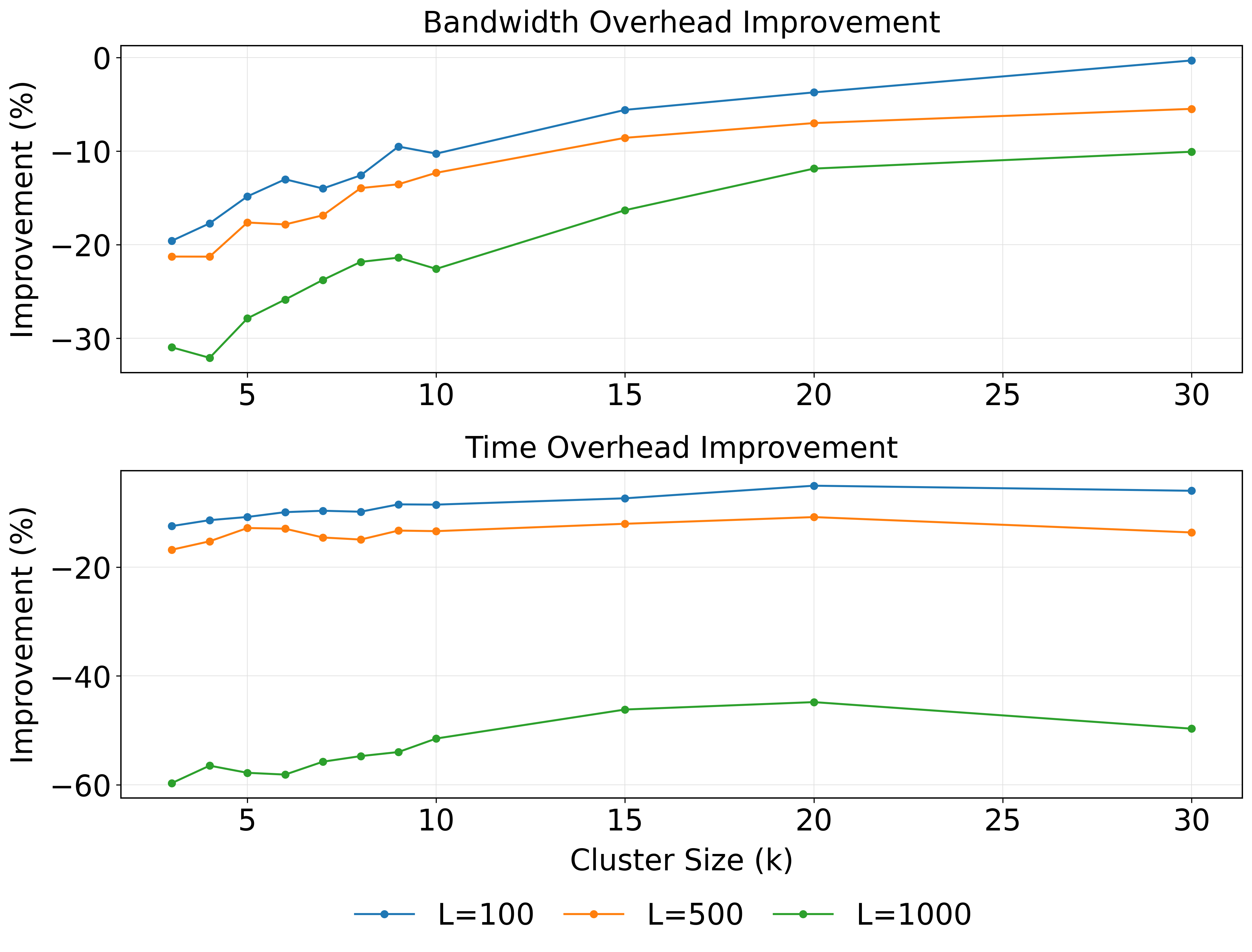}
  \caption{Average total overhead improvement of Adaptive Tamaraw compared to global Tamaraw, plotted across different anonymity set sizes $k$ for three values of $L$.}
  \label{fig:overhead_vs_k_L}
\end{figure}

\section{Ablation Studies}
\subsection{Cluster Diversity Evaluation}
\label{appendix:cluster_diversity}

As discussed in Section~\ref{sec:AS_gen}, our $k$-anonymity clustering algorithm uses a Tamaraw-specific distance function that encourages low intra-cluster diversity. To validate this, we evaluate the purity of generated anonymity sets.

Following~\cite{singhal2005clustering}, we compute cluster $i$'s purity $p_i$ as $\frac{\max_j N_{i,j}}{N_{p_i}}$, where $N_{i,j}$ is the number of traces from website $j$ in cluster $i$, and $N_{p_i}$ is the total traces in cluster $i$. Lower purity indicates greater diversity. We compute average purity across all clusters for each $k$ with fixed $L = 100$. Figure~\ref{fig:average_purities} shows these values alongside the theoretical lower bound $f(k) = 1/k$, representing optimal diversity for a perfectly balanced cluster. As shown in the figure, the empirical purity values closely track the $1/k$ curve, indicating that our clustering procedure approaches the theoretical optimum in terms of diversity. 



\begin{figure}[htb]
    \centering
    \includegraphics[width=0.45\textwidth]{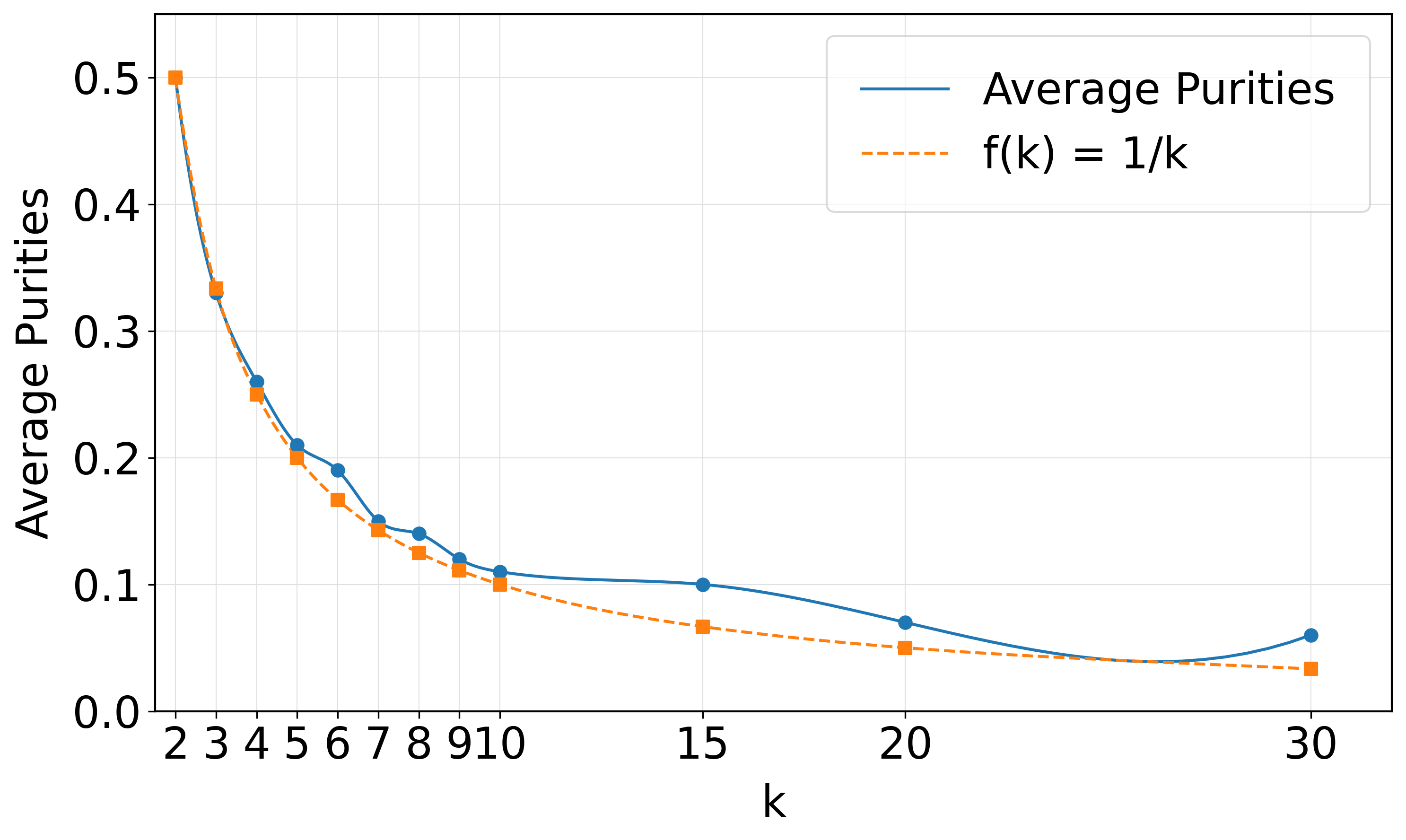}
    \caption{Average cluster purities vs.\ anonymity set size $k$, for $L=100$. The dashed line shows the theoretical lower bound $1/k$.}
    \label{fig:average_purities}
\end{figure}


\subsection{CAST Modifications}
\label{appendix:cast}

To assess the effect of our CAST algorithm changes, we performed an ablation study measuring each modification’s impact on clustering metrics using the Sirinam et al. dataset (Table \ref{tab:clustering-variants}). The baseline CAST creates highly fragmented clusters, with over 81\% containing fewer than 20 traces. Adding a dynamic threshold increases average cluster size, while cleaning and post-processing improve size balance. Cleaning reduces small clusters from 64.3\% to 13.6\%, and post-processing merges remaining ones to 0.2\%, yielding a final size ratio of 3.83 and an average cluster size of 200. Together, these updates produce stable, well-balanced traffic pattern clusters vital to our defense framework.

\begin{table}[htbp]
\centering
\caption{Effects of CAST modifications on clustering quality.}
\footnotesize
\label{tab:clustering-variants}
\small
\begin{tabular}{l|ccc}
\toprule
\textbf{Variant} & \textbf{Average} & \textbf{Small Cluster} & \textbf{Largest/} \\
 & \textbf{Cluster Size} & \textbf{\%} & \textbf{Smallest} \\
\midrule
Baseline (Raw) & 13 & 81.9 & 200 \\
+ Dynamic Threshold & 70 & 64.3 & 633 \\
+ Cleaning Step & 74 & 13.6 & 27.7 \\
+ Post Processing & 200 & 0.2 & 3.83 \\
\bottomrule
\end{tabular}
\end{table}

\subsection{Sensitivity Analysis on $\alpha$}
 The early-switching confidence threshold $\alpha$ controls the balance between efficiency and security in our adaptive defense by defining the safe timestamp $\tau_s$ where the system switches from global to local configuration. We derived anonymity sets from the training set and ran a sensitivity study on validation data. Lower $\alpha$ values trigger earlier, less confident switching, extending the use of efficient local parameters but increasing vulnerability, as predictions align less with true anomaly structures. As Table~\ref{tab:alpha-performance} shows, reducing $\alpha$ from 1.0 to 0.6 boosts overhead savings from $\sim$15\% to $\sim$34\%, while the $tau_s$ attacker accuracy (introduced in Appendix \ref{appendix:sec_bound}) rises from 34\% to 41\%. We chose $\alpha = 0.9$ for its balanced trade-off.
\begin{table}[htbp]
\centering
\caption{Performance metrics across different alpha values for adaptive defense mechanism.}
\label{tab:alpha-performance}
\footnotesize
\begin{tabular}{l|ccccc}
\toprule
\diagbox[width=10em,height=2.5\baselineskip]{Metrics}{$\alpha$} & \textbf{0.6} & \textbf{0.7} & \textbf{0.8} & \textbf{0.9} & \textbf{1.0} \\
\midrule
Overhead Improvement & -34\% & -32\% & -27\% & -25\% & -15\% \\
$\tau_s$ Aware Attacker Acc. & 41\% & 39\% & 38\% & 36\% & 34\% \\
\bottomrule
\end{tabular}
\end{table}

\subsection{Webpage vs Pattern Analysis}

 To quantify the end-to-end advantage of clustering fine-grained traffic patterns rather than the conventional approach of clustering at the website level, we conducted an ablation study comparing the performance of Adaptive Tamaraw under both configurations, using a bucket length of $L = 100$ on the Sirinam et al. dataset. Table~\ref{tab:granularity-results} presents the resulting average total overhead (bandwidth + time), confirming that operating at the pattern level consistently yields lower overhead across all tested values of $k$.  For instance, at $k = 2$, our fine-grained approach reduces overhead by 11 percentage points compared to the website-level baseline. This performance gain stems from the increased homogeneity of pattern-level anonymity sets.
\begin{table}[htbp]
\centering
\caption{Combined Overheads across different granularity levels.}
\label{tab:granularity-results}
\small
\begin{tabular}{l|ccc}
\toprule
\textbf{Granularity} & \textbf{$k=2$} & \textbf{$k=7$} & \textbf{$k=30$} \\
\midrule
Website Level & 219\% & 229\% & 243\% \\
Pattern Level & \textbf{205}\% & \textbf{217}\% & \textbf{235}\% \\
\bottomrule
\end{tabular}
\end{table}

\section{Computational and Memory Overhead}
\label{appendix:costs}

We report the computational and memory costs of Adaptive Tamaraw, measured on a system with an AMD EPYC 9454 CPU and an NVIDIA H100 GPU (20 GB MIG). The main Holmes model is 8.21 MB, and each lightweight kFP model is 0.41 MB. Inference latency per decision point is 1.84 ms: 0.15 ms for site prediction and 1.69 ms for traffic pattern identification. Our ECDIRE procedure retains 4.09 safe timestamps per site, requiring about 160 MB total storage for 95 sites. As summarized in Table~\ref{tab:comp-costs}, the defense is lightweight and suitable for real-time deployment.

\begin{table}[h]
\centering
\caption{Summary of computational and resource costs.}
\label{tab:comp-costs}
\small
\begin{tabular}{l|c}
\hline
\textbf{Metric} & \textbf{Value} \\
\hline
Holmes Model Size & 8.21 MB \\
kFP Model Size (per instance) & 0.41 MB \\
Total Inference Latency (per decision) & 1.84 ms \\
Avg. Safe Timestamps per Site & 4.09 \\
\hline
\end{tabular}
\end{table}

\section{Proof of Security Bound for Adaptive Tamaraw}
\label{appendix:proof}


\textbf{Notation.}\;
Let \(\mathcal{S}= \{S_1,\dots,S_m\}\) be the anonymity sets of traffic patterns (produced in
Section~\ref{sec:AS_gen}).
For every set \(S_i\) we fix a \emph{safe time} \(t_{S_i}\); this value is the same for
\emph{all} traces in \(S_i\) but may differ across sets.
  



Because \(t_{S_i}\) is fixed \emph{inside} the set, an observer who notices
the switch time learns only that the trace belongs to an anonymity set
whose safe time equals~\(t_{S_i}\).

\medskip
\textbf{AS-specific tail.} Inside an anonymity set \( S_i \), let
\[
S_{i,\ell} = \left\{ t \in S_i : \operatorname{len}(t) = \ell \right\}
\]
denote the subset of defended traces whose total length is \( \ell \).
Each \( S_{i,\ell} \) corresponds to a bucket of traces that become indistinguishable after regularization. For each such bucket, an optimal attacker performs a majority vote, achieving success rate
\[
\frac{\max_w \left| \{ t \in S_{i,\ell} : \operatorname{site}(t) = w \} \right|}{|S_{i,\ell}|}.
\]

To model the average-case attacker success over \( S_i \), we define the \emph{non-uniform attacker accuracy} as the expected success rate across all observed length buckets. Let \(\mathcal{L}\) denote the set of all distinct length buckets observed in \(S_i\). Then:

\begin{align}
\label{eq:A_bar}
\bar{A}(S_i) &= \sum_{\ell \in \mathcal{L}} \frac{|S_{i,\ell}|}{|S_i|} \cdot 
\frac{\max\limits_{w} \left| \left\{ t \in S_{i,\ell} : \operatorname{site}(t) = w \right\} \right|}{|S_{i,\ell}|} \\
&= \mathbb{E}_{\ell \sim \mathcal{L}} \left[
\frac{\max\limits_{w} \left| \left\{ t \in S_{i,\ell} : \operatorname{site}(t) = w \right\} \right|}{|S_{i,\ell}|}
\right]
\end{align}


\begin{lemma}[Post-switch non-uniform weighted $\delta$] 
\label{lem:delta-i}
For every anonymity set $S_i$, the AS-specific mapping $D_{S_i}$ is
\emph{non-uniformly} $\delta_i$-weighted-non-injective, where
$\displaystyle \delta_i = {1}/{\bar{A}(S_i)}$ and
$\bar{A}(S_i)$ is the average attacker success rate defined in~\eqref{eq:A_bar}.
\end{lemma}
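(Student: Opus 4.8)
The plan is to unfold the definition of non-uniform weighted $\delta$-non-injectivity from Section~\ref{sec:problem_statement} and show that, for the restricted map $D_{S_i}$, the left-hand side of that inequality collapses exactly to the quantity $\bar{A}(S_i)$ of~\eqref{eq:A_bar}. The structural fact I would establish first is that the fibers of $D_{S_i}$ over $S_i$ are precisely the length buckets $S_{i,\ell}$. Since Tamaraw emits fixed-size cells at the constant rates $p_{\mathrm{out}}^{(S_i)}, p_{\mathrm{in}}^{(S_i)}$ and keeps padding each direction up to the next multiple of $L$, any two raw traces in $S_i$ whose defended lengths coincide produce identical output traces --- same directional sequence, same timing, same length --- while traces padded to different multiples of $L$ are distinguishable by length alone. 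Hence for a defended output $f'$ with $\operatorname{len}(f') = \ell$ we have $D_{S_i}^{-1}(f') = S_{i,\ell}$, and its $w$-labelled part is $\{\,t \in S_{i,\ell} : \operatorname{site}(t) = w\,\}$.

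Given this, I would compute the per-output term of the definition. An optimal attacker who sees $f'$ guesses the majority website inside the fiber, so that the conditional optimal-attacker success rate $1/\tilde{\delta}(f')$ of~\eqref{eq:weighted-delta} equals
\[
\frac{1}{\tilde{\delta}(f')} \;=\; \frac{\max_w \bigl|\{\,t \in S_{i,\ell} : \operatorname{site}(t) = w\,\}\bigr|}{|S_{i,\ell}|}.
\]
I would then take the expectation over the output distribution $P(f')$, which is the pushforward of the trace distribution on $S_i$ under $D_{S_i}$. Because the map $\ell \mapsto f'$ is a bijection on the support and each such $f'$ is observed with probability $|S_{i,\ell}|/|S_i|$, averaging the displayed quantity reproduces term by term the weighted sum in~\eqref{eq:A_bar}, giving $\mathbb{E}_{f'}\bigl[1/\tilde{\delta}(f')\bigr] = \bar{A}(S_i)$. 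Setting $\delta_i = 1/\bar{A}(S_i)$ turns this into $\mathbb{E}_{f'}\bigl[1/\tilde{\delta}(f')\bigr] \le 1/\delta_i$ (with equality), which is exactly the assertion that $D_{S_i}$ is non-uniformly $\delta_i$-weighted-non-injective.

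I expect the only delicate point to be the first step: the claim that $D_{S_i}$ merges two traces \emph{iff} their regularized lengths agree. This rests on the determinism and rate-uniformity of Tamaraw's scheduler --- once the send queue is non-empty the transmission pattern is dictated solely by the fixed inter-cell intervals and the bucket size $L$, independent of the underlying trace's content or arrival times --- together with the padding/tie-breaking behavior at the end of a load. Rather than re-derive it I would invoke the original Tamaraw construction and its indistinguishability argument~\cite{cai2014systematic}. A secondary, purely bookkeeping issue is being explicit that the ``expectation over $f'$'' and the ``expectation over $\ell \sim \mathcal{L}$'' in~\eqref{eq:A_bar} are literally the same finite sum, so that no subtlety about the distribution is hidden in the change of index.
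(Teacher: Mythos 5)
Your proposal is correct and follows essentially the same route as the paper's proof: partition $S_i$ into length buckets $S_{i,\ell}$, identify the per-bucket optimal attacker success with $1/\tilde{\delta}(f'_{i,\ell})$, and average with weights $|S_{i,\ell}|/|S_i|$ to recover $\bar{A}(S_i)$ exactly. The only difference is that you spell out why the fibers of $D_{S_i}$ coincide with the length buckets (via Tamaraw's deterministic constant-rate scheduling), a step the paper asserts without elaboration.
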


\begin{proof}
After the switch, all traces in $S_i$ are padded using the fixed parameters
$(p_{\mathrm{in}}^{(S_i)}, p_{\mathrm{out}}^{(S_i)}, L)$, so the only
observable feature is the total defended length~$\ell$.
This partitions $S_i$ into buckets
\[
S_{i,\ell} = \left\{ t \in S_i : \operatorname{len}(t) = \ell \right\}.
\]

\smallskip
\emph{Weighted pre-image size.}
Each bucket length $\ell$  (i.e., each trace \(t \in S_{i,\ell}\)) corresponds to a single defended trace
\(f'_{i,\ell}\) . The weighted
pre-image size of \(f'_{i,\ell}\), by~\eqref{eq:weighted-delta}, is
\[
  \tilde{\delta}\left(f'_{i,\ell}\right)
  = \frac{|S_{i,\ell}|}
         {\displaystyle\max_{w} \left| \left\{ t \in S_{i,\ell} : \text{site}(t) = w \right\} \right| }.
\]

\smallskip
\emph{Attacker success.}
The optimal attacker outputs the majority website in each bucket, achieving
success \(1/\tilde{\delta}(f'_{i,\ell})\).

\smallskip
\emph{Average-case accuracy.}
The attacker’s \emph{expected} success rate over $S_i$ is given by the non-uniform accuracy
\[
\bar{A}(S_i)
= \sum_{\ell} \frac{|S_{i,\ell}|}{|S_i|} \cdot \frac{1}{\tilde{\delta}(f'_{i,\ell})} =  \mathbb{E}_{f' \sim S_i} \left[\frac{1}{\tilde{\delta}(f')} \right],
\]
\smallskip
Therefore, by the definition in Section~\ref{sec:problem_statement}, the mapping $D_{S_i}$ is non-uniformly $\delta_i$-weighted-non-injective with $\delta_i = 1/\bar{A}(S_i)$. It follows that the attacker's overall success rate for any trace in $S_i$ is bounded by
\[
\Pr[\text{success} \mid S_i] \;\le\; \bar{A}(S_i). \qedhere
\]
\end{proof}







\begin{theorem}[Global non-uniformly weighted $\delta$–non-injectivity]
\label{thm:delta-global}
Let $\mathcal{S}$ be the set of all anonymity sets obtained, and for each anonymity set $S_i \in \mathcal{S}$, let
\[
\delta_i = \frac{1}{\bar{A}(S_i)}.
\]
Define the global non-uniform weighted injectivity parameter \(\delta\) as
\[
\frac{1}{\delta} = \mathbb{E}_{S_i \sim \mathcal{S}} \left[\frac{1}{\delta_i} \right] = \mathbb{E}_{S_i}[\bar{A}(S_i)].
\]
Then, \textit{Adaptive Tamaraw} is non-uniformly weighted $\delta$-non-injective, and the attacker’s average success probability is bounded by
\[
\boxed{\Pr[\text{success}] \;\le\; \frac{1}{\delta}}.
\]
\end{theorem}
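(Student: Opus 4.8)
The plan is to reduce the global statement to the per-set bound of Lemma~\ref{lem:delta-i} by conditioning on which anonymity set the originating trace comes from. Let $S(f)$ denote that set. By the law of total expectation,
\[
\Pr[\text{success}] \;=\; \sum_{i=1}^{m} \Pr[S(f)=S_i]\cdot\Pr[\text{success}\mid S(f)=S_i].
\]
I would bound each conditional probability by $\bar{A}(S_i)$; summing the resulting inequality then yields $\sum_i \Pr[S(f)=S_i]\,\bar{A}(S_i) = \mathbb{E}_{S_i\sim\mathcal{S}}[\bar{A}(S_i)] = 1/\delta$, where $\mathbb{E}_{S_i\sim\mathcal{S}}$ is read as the expectation under the distribution that gives each $S_i$ the probability that a random page load belongs to it -- the ``non-uniform'' weighting already used in Section~\ref{sec:problem_statement}. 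Once the per-set inequality is established, the conclusion is immediate.

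The step that requires care is $\Pr[\text{success}\mid S(f)=S_i]\le\bar{A}(S_i)$, because the attacker observes the \emph{whole} defended trace $\widehat{f}$, including the moment the schedule switches from $D_{\text{glob}}$ to $D_{S_i}$, and could try to exploit that to learn the set identity. I would dispatch this with a monotonicity argument: whatever fixed strategy $\mathcal{A}$ the real, set-oblivious attacker uses, an attacker who is additionally \emph{told} that $f\in S_i$ can simulate $\mathcal{A}$ verbatim, so $\Pr[\text{success}\mid S(f)=S_i]$ is at most the success of the best set-aware attacker on $S_i$. Lemma~\ref{lem:delta-i} is exactly a bound on the latter: after the switch all of $S_i$ is padded with the fixed parameters $(p_{\mathrm{in}}^{(S_i)},p_{\mathrm{out}}^{(S_i)},L)$, the only surviving observable inside $S_i$ is the regularized length bucket $S_{i,\ell}$, and per-bucket majority voting attains expected success exactly $\bar{A}(S_i)$ as in~\eqref{eq:A_bar}. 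I would also point out that the pre-switch segment contributes nothing extra: on $[0,t_{S_i})$ every trace of every set is shaped by the same constant-rate schedule $D_{\text{glob}}$, so the prefix has a fixed shape and the switch instant reveals only that the trace's safe time equals $t_{S_i}$, which conditioning on $S(f)=S_i$ already implies.

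I expect the principal obstacle to be delimiting the idealized model in which the statement is literally true, not any computation. Three points deserve explicit treatment. First, the ``extra information only helps'' reduction should be phrased for a fixed strategy, as above, so no abstract optimal attacker is invoked; the dominating attacker is the concrete length-bucket majority rule of Lemma~\ref{lem:delta-i}. Second, the theorem tacitly assumes the clean switching model of Appendix~\ref{appendix:proof} -- every trace is assigned to its true set $S_i$ at the common safe time $t_{S_i}$ -- so the early-classifier misassignments and fall-back-to-global traces seen in the experiments lie outside the scope of this bound, and I would state that hypothesis at the outset. Third, were two anonymity sets to produce colliding defended outputs in a real deployment, merging the two pre-images cannot increase the total mass of majority-label traces, so the attacker's average success can only weakly decrease; hence the collision-free quantity $1/\delta$ assembled from the isolated $\bar{A}(S_i)$ remains a valid upper bound, and keeping the final relation as ``$\le$'' absorbs that case. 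With these in place, the summation step closes the argument.
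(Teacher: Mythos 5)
Your proposal is correct and follows essentially the same route as the paper's proof: condition on the anonymity set, invoke Lemma~\ref{lem:delta-i} for the per-set bound $\bar{A}(S_i)$, and average with weights $P(S_i)$ --- which the paper phrases equivalently as the identity $\mathbb{E}_{S_i}[\bar{A}(S_i)] = \mathbb{E}_{f'\sim\mathcal{F}'}\bigl[1/\tilde{\delta}(f')\bigr]$ via $P(f'_{i,\ell}) = P(S_i)\,|S_{i,\ell}|/|S_i|$. Your added remarks on the switch-time observable, the idealized correct-assignment switching model, and cross-set output collisions address points the paper leaves implicit, and do not change the argument.
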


\begin{proof}
Recall from Lemma~\ref{lem:delta-i} that for each anonymity set \(S_i\), the attacker’s success rate over its defended outputs is bounded by \(\bar{A}(S_i) = \mathbb{E}_{f' \sim S_i}\left[ \frac{1}{\tilde{\delta}(f')} \right]\), and thus the mapping \(D_{S_i}\) is non-uniformly \(\delta_i\)-weighted-non-injective, where \(\delta_i = 1/\bar{A}(S_i)\).

Now consider the overall defended distribution \(\mathcal{F}'\), which consists of all post-regularization traces \(f'_{i,\ell}\) produced from inputs in various anonymity sets \(S_i\). Each defended trace \(f'\) must originate from some \(S_i\) with probability \(P(S_i)\), and within \(S_i\), from a bucket of length \(\ell\) with probability \(|S_{i,\ell}|/|S_i|\).

We begin with the average inverse diversity over anonymity sets:
\begin{align*}
\mathbb{E}_{S_i}\left[\frac{1}{\delta_i}\right]
&= \mathbb{E}_{S_i}\left[\bar{A}(S_i)\right] \\
&= \sum_i P(S_i) \cdot \bar{A}(S_i) \\
&= \sum_i P(S_i) \cdot \sum_{\ell} \frac{|S_{i,\ell}|}{|S_i|} \cdot \frac{1}{\tilde{\delta}(f'_{i,\ell})}.
\end{align*}

Note that \(P(S_i) \cdot \frac{|S_{i,\ell}|}{|S_i|}\) is precisely the probability of observing the defended trace \(f'_{i,\ell}\) in the output space \(\mathcal{F}'\), since:
\[
P(f'_{i,\ell}) = P(S_i) \cdot \frac{|S_{i,\ell}|}{|S_i|}.
\]

Also, \(\tilde{\delta}(f'_{i,\ell})\) is the weighted pre-image size of \(f'_{i,\ell}\) as defined in Section~\ref{sec:problem_statement}.

We can therefore rewrite the sum as:
\begin{align*}
\sum_i \sum_{\ell} P(S_i) \cdot \frac{|S_{i,\ell}|}{|S_i|} \cdot \frac{1}{\tilde{\delta}(f'_{i,\ell})}
&= \sum_{f' \in \mathcal{F}'} P(f') \cdot \frac{1}{\tilde{\delta}(f')} \\
&= \mathbb{E}_{f' \sim \mathcal{F}'}\left[ \frac{1}{\tilde{\delta}(f')} \right].
\end{align*}

Thus, we have:
\[
\mathbb{E}_{S_i}\left[\frac{1}{\delta_i}\right] = \mathbb{E}_{f' \sim \mathcal{F}'}\left[ \frac{1}{\tilde{\delta}(f')} \right],
\]
which confirms that the expected attacker success over anonymity sets equals the expected attacker success over the defended output distribution.

Therefore, \emph{Adaptive Tamaraw} as a whole satisfies the non-uniform \(\delta\)-non-injectivity condition with
\[
\delta = \left( \mathbb{E}_{f' \sim \mathcal{F}'}\left[ \frac{1}{\tilde{\delta}(f')} \right] \right)^{-1},
\]
completing the proof.
\end{proof}

\end{document}